\newtheorem{theorem}{Theorem}[section]
\newtheorem{prop}[theorem]{Proposition}
\newtheorem{conj}[theorem]{Conjecture}
\begin{document}
\begin{titlepage}
% \noindent {\tt IITM/PH/TH/2016/?}
 \hfill {\tt arXiv:1607.nnnn} \\[4pt]
  \mbox{}\hfill \hfill{\fbox{\textbf{v1; July 2016 }}}
\begin{center}
\textsf{\large On a square-ice analogue of plane partitions}
\end{center}
\bigskip
\begin{center}
\textsf{Suresh Govindarajan$^\dagger$, Anthony J. Guttmann$^*$ and Varsha Subramanyan$^\dagger$}\\[5pt]
$^\dagger$Department of Physics, \\ Indian Institute of Technology Madras,\\ Chennai 600036, India \\
Email: suresh@physics.iitm.ac.in  \\[3pt]
and \\[3pt]
$^*$ARC Centre of Excellence for Mathematics and Statistics of Complex Systems,\\ School of Mathematics and Statistics,\\
The University of Melbourne, \\
 Victoria 3010, Australia \\
 Email: tonyg@ms.unimelb.edu.au
\end{center}
\begin{abstract}
We study a one-parameter family  ($\ell=1,2,3,\ldots$) of configurations  that are square-ice analogues of plane partitions. Using an algorithm due to Bratley and McKay, we carry out exact enumerations in order to study their asymptotic behaviour and establish, via Monte Carlo simulations as well as explicit bounds, that the asymptotic behaviour is similar to that of plane partitions. We finally carry out a series analysis and provide independent estimates for the asymptotic behaviour.
\end{abstract}
\end{titlepage}

\section{Introduction}

A seller of oranges arranges his oranges in the following fashion. The top layer has a row of $\ell$ ($=1,2,3,\ldots$)  oranges, the second layer has oranges forming a  $2\times (\ell+1)$ rectangle and in the $k$-th layer, the oranges form a $k\times (k+\ell-1)$ rectangle (see Figure \ref{orangestack}). We call the parameter $\ell$ the \textit{width} of a configuration. Assuming that there are infinitely many layers, in how many ways can one remove $n$ oranges without upsetting any other oranges? Denote this by number by $a_\ell(n)$. We study properties of the sequences $a_\ell(n)$ in the paper.

\begin{figure}[htb]
\begin{center}
%\begin{tikzpicture}[scale=1] %[x=(210:1cm), y=(-30:1cm), z=(90:1cm),scale=0.7]
%\foreach \L in {1,...,5}{
%  \foreach \a in {1,...,7}{
%\node at (\a,\L) {\includegraphics[height=1cm]{orange3.png}} ;
%}}
%\foreach \L in {1,...,4}{
%  \foreach \a in {1,...,6}{
%\node at (\a+0.5,\L+0.5) {\includegraphics[height=1cm]{orange3.png}} ;
%}}
%\foreach \L in {1,...,3}{
%  \foreach \a in {1,...,5}{
%\node at (\a+1,\L+1) {\includegraphics[height=1cm]{orange3.png}} ;
%}}
%\foreach \L in {1,...,2}{
%  \foreach \a in {1,...,4}{
%\node at (\a+1.5,\L+1.5) {\includegraphics[height=1cm]{orange3.png}} ;
%}}
% \foreach \a in {1,...,3}{
%\node at (\a+2,3) {\includegraphics[height=1cm]{redorange3.png}} ;
%}
%%\node at (2,2)  {\includegraphics[height=1cm]{orange3.png}} ;
%%\node at (3,2)  {\includegraphics[height=1cm]{orange3.png}} ;
%%\node at (4,2)  {\includegraphics[height=1cm]{orange3.png}} ;
%\end{tikzpicture}
\includegraphics[height=3in]{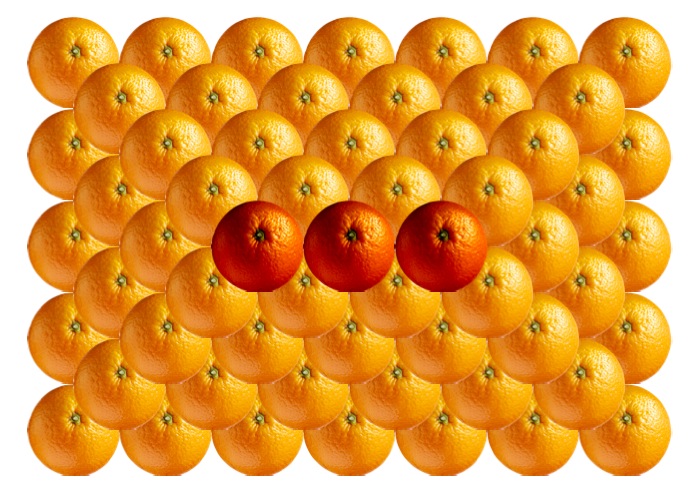}
\end{center}
\caption{A stack with five layers of oranges  and width $\ell=3$.} \label{orangestack}
\end{figure}

In an alternative definition of the same problem in terms of height functions (as given  in section 2), one observes that the local conditions on the height function are the same as those for plane and pyramid partitions. Propp in a post in the domino forum  \cite{Propp2014} in August 2014 asked whether one can find explicit formulae for the generating functions as is known in the case of plane and pyramid partitions  \cite{Macmahon1896memoir,Young2007,Szendroi2008}.  The reformulation in terms of stacking oranges is due to R. Kenyon and the variant involving the number of oranges is due to Young \cite{Propp2014}.

In this paper, we address this issue by explicitly generating numbers for width $a_\ell(n)$ for $\ell=1,2,\ldots,6$ by adapting an algorithm due to Bratley-McKay  \cite{Bratley:1967a}. We have been unable to find  an explicit formula for the generating function. In the absence of a  formula for the generating function, we address the following two questions  in this paper.
\begin{enumerate}
\item For fixed $n$, what are the properties of $a_\ell(n)$? 
\item For fixed width $\ell$, what is the asymptotic behaviour of $a_\ell(n)$?
\end{enumerate}

The organisation of the the paper is as follows. After the introductory section where we state the problem at hand, in section 2, we give a formal definition of the problem and study the properties of $a_\ell(n)$ for fixed $n$. We obtain an interesting conjecture for $\ell\geq \lceil{n/2}\rceil$. In section 3,
we first set upper and lower bounds on $a_\ell(n)$ and numerically estimate the asymptotic behaviour using  transition matrix Monte Carlo simulations for $\ell \in[1,6]$. In section 4, we analyse the series of numbers obtained from exact enumeration to independently estimate the asymptotic behaviour as well as extrapolate the sequence of coefficients in order to obtain the next ten coefficients for $a_1(n)$. We conclude with a few remarks in section 5. Appendix A tabulates the results of our exact enumerations. In appendix B, we introduce a sub-class of plane partitions that appears naturally in this work and set bounds on the asymptotic behaviour of these restricted plane partitions.

\section{Definitions and exact results}

\textbf{Definition:} Let $v=(x,y)\in \mathbb{Z}^2$ and for fixed $\ell=1,2,3,\ldots$, following  \cite{Propp2014} define 
$$
h^{(\ell)}_0(v)=\begin{cases}
|x|+|y| & x<0 \\
|x+y| & 0\leq x < \ell\\
|y+\ell -1| + |x-\ell+1| & x\geq \ell
\end{cases}\ .
$$
The \textit{height} function $h$  on $\mathbb{Z}^2$ is an integer-valued function that agrees with $h^{(\ell)}_0$ almost everywhere (i.e., at all but finitely many places), is greater than or equal to $h^{(\ell)}_0$ everywhere, and satisfies the
condition that if $u$ and $v$ are adjacent locations in $\mathbb{Z}^2$,
$|h(u)-h(v)|=1$. The last condition is called the \textit{ice rule}.

\noindent \textbf{Definition:} Define the volume of the height function as follows:
\begin{equation}
n:=\sum_{(x,y)\in \mathbb{Z}^2} \frac12 \left(h(x,y)-h^{(\ell)}_0(x,y)\right)\ .
\end{equation}
\noindent \textbf{Definition:} Let $a_\ell(n)$ denote the number of height functions with volume $n$ for an initial configuration of width $\ell$.

 \begin{figure}[hbt]
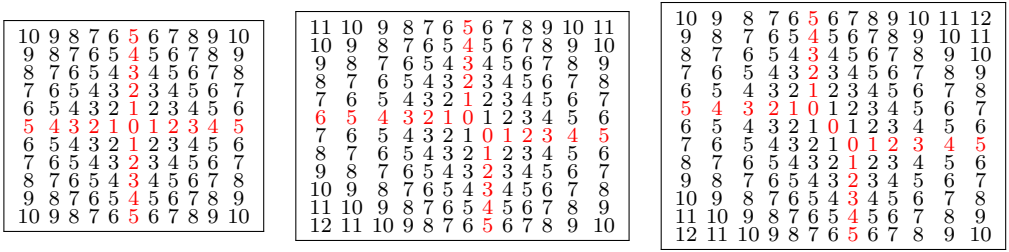

 $$
 \boxed{
 \begin{smallmatrix}
10 & 9 & 8 & 7 & 6 & \color{red}5 & 6 & 7 & 8 & 9 & 10 \\
9 & 8 & 7 & 6 & 5 & \color{red}4 & 5 & 6 & 7 & 8 & 9 \\
8 & 7 & 6 & 5 & 4 & \color{red}3 & 4 & 5 & 6 & 7 & 8 \\
7 & 6 & 5 & 4 & 3 &\color{red} 2 & 3 & 4 & 5 & 6 & 7 \\
6 & 5 & 4 & 3 & 2 & \color{red} 1 & 2 & 3 & 4 & 5 & 6 \\
\color{red}5 & \color{red}4 &\color{red} 3 &\color{red} 2 & \color{red}1 & {\color{red} 0} &\color{red} 1 & \color{red}2 & \color{red}3 & \color{red}4 & \color{red}5 \\
6 & 5 & 4 & 3 & 2 & \color{red}1 & 2 & 3 & 4 & 5 & 6 \\
7 & 6 & 5 & 4 & 3 & \color{red}2 & 3 & 4 & 5 & 6 & 7 \\
8 & 7 & 6 & 5 & 4 & \color{red}3 & 4 & 5 & 6 & 7 & 8 \\
9 & 8 & 7 & 6 & 5 & \color{red}4 & 5 & 6 & 7 & 8 & 9 \\
10 & 9 & 8 & 7 & 6 & \color{red}5 & 6 & 7 & 8 & 9 & 10
\end{smallmatrix}
}\quad
\boxed{
\begin{smallmatrix}
 11 & 10 & 9 & 8 & 7 & 6 & \color{red}5 & 6 & 7 & 8 & 9 & 10 & 11 \\
 10 & 9 & 8 & 7 & 6 & 5 & \color{red}4 & 5 & 6 & 7 & 8 & 9 & 10 \\
 9 & 8 & 7 & 6 & 5 & 4 & \color{red}3 & 4 & 5 & 6 & 7 & 8 & 9 \\
 8 & 7 & 6 & 5 & 4 & 3 & \color{red}2 & 3 & 4 & 5 & 6 & 7 & 8 \\
 7 & 6 & 5 & 4 & 3 & 2 &\color{red} 1 & 2 & 3 & 4 & 5 & 6 & 7 \\
\color{red} 6 &\color{red} 5 &\color{red} 4 &\color{red} 3 & \color{red}2 & \color{red}1 & {\color{red} 0} & 1 & 2 & 3 & 4 & 5 & 6 \\
 7 & 6 & 5 & 4 & 3 & 2 & 1 & {\color{red} 0} & \color{red}1 & \color{red}2 &\color{red} 3 &\color{red} 4 &\color{red} 5 \\
 8 & 7 & 6 & 5 & 4 & 3 & 2 & \color{red}1 & 2 & 3 & 4 & 5 & 6 \\
 9 & 8 & 7 & 6 & 5 & 4 & 3 & \color{red}2 & 3 & 4 & 5 & 6 & 7 \\
 10 & 9 & 8 & 7 & 6 & 5 & 4 & \color{red}3 & 4 & 5 & 6 & 7 & 8 \\
 11 & 10 & 9 & 8 & 7 & 6 & 5 & \color{red}4 & 5 & 6 & 7 & 8 & 9 \\
 12 & 11 & 10 & 9 & 8 & 7 & 6 & \color{red}5 & 6 & 7 & 8 & 9 & 10
\end{smallmatrix}
}
\quad
\boxed{ \begin {smallmatrix} 
10 & 9 & 8 & 7 & 6 & \color{red}5 & 6 & 7 & 8 & 9 & 10 & 11 & 12 \\
9 & 8 & 7 & 6 & 5 & \color{red}4 & 5 & 6 & 7 & 8 & 9 & 10 & 11 \\
8 & 7 & 6 & 5 & 4 & \color{red}3 & 4 & 5 & 6 & 7 & 8 & 9 & 10 \\
7 & 6 & 5 & 4 & 3 & \color{red}2 & 3 & 4 & 5 & 6 & 7 & 8 & 9 \\
6 & 5 & 4 & 3 & 2 &\color{red} 1 & 2 & 3 & 4 & 5 & 6 & 7 & 8 \\
\color{red}5 & \color{red}4 & \color{red}3 & \color{red}2 & \color{red}1 &{\color{red} 0} & 1 & 2 & 3 & 4 & 5 & 6 & 7 \\
6 & 5 & 4 & 3 & 2 & 1 & {\color{red} 0} & 1 & 2 & 3 & 4 & 5 & 6 \\
7 & 6 & 5 & 4 & 3 & 2 & 1 & {\color{red} 0} & \color{red}1 & \color{red}2 & \color{red}3 & \color{red}4 &\color{red} 5 \\
8 & 7 & 6 & 5 & 4 & 3 & 2 & \color{red}1 & 2 & 3 & 4 & 5 & 6 \\
9 & 8 & 7 & 6 & 5 & 4 & 3 & \color{red}2 & 3 & 4 & 5 & 6 & 7 \\
10 & 9 & 8 & 7 & 6 & 5 & 4 & \color{red}3 & 4 & 5 & 6 & 7 & 8 \\
11 & 10 & 9 & 8 & 7 & 6 & 5 & \color{red}4 & 5 & 6 & 7 & 8 & 9 \\
12 & 11 & 10 & 9 & 8 & 7 & 6 & \color{red}5 & 6 & 7 & 8 & 9 & 10
\end {smallmatrix}
}
$$
\caption{Initial height functions $h_0^\ell$ for width $\ell=1,2,3$ inside a square. The red numbers partition the plane into four parts which we label as the NE, NW, SW and the SE parts. The creases are indicated in red.  } \label{refconfigs}
\end{figure}
\ytableausetup{centertableaux,smalltableaux}

\subsection{The reduced height function}

 \noindent \textbf{Definition:} Define the reduced height function (on $\mathbb{Z}^2$) as follows:
\begin{equation}
r(x,y)=\frac12 \left(h(x,y)-h^{(\ell)}_0(x,y)\right)\ ,
\end{equation}
where $r(x,y)$ is a non-negative integer. Call the set of points $(x,-x)$ (for $0\leq x < \ell$) where the topmost oranges lie, the \textit{central crease}. The \textit{northern crease} is the set of points $(0,y)$ with $y>0$ and  the \textit{western crease} is the set of points $(x,0)$ with $x<0$. The \textit{eastern crease} refers to the points $(x+\ell-1,\ell-1)$ for $x>0$ and the \textit{southern crease} to the set of points  $(\ell-1,1-\ell+y)$ for $y<0$. These points located on the creases are indicated in red numbers in the reference configurations shown in Figure \ref{refconfigs}.
\begin{prop} The reduced height function is a weakly decreasing function as one moves away from the creases. Further, for unit steps along the N/S/E/W directions, it can change by at most one. 
\end{prop}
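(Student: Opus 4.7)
The plan is to reduce the proposition to the single identity
$$
r(u) - r(v) \;=\; \tfrac{1}{2}\bigl[(h(u)-h(v)) - (h^{(\ell)}_0(u)-h^{(\ell)}_0(v))\bigr],
$$
which is immediate from the definition of $r$. The first step is to check that $h^{(\ell)}_0$ itself obeys the ice rule, i.e.\ $h^{(\ell)}_0(u) - h^{(\ell)}_0(v) \in \{-1,+1\}$ for adjacent lattice points $u,v$. This is a routine case-by-case computation on the three-piece definition: within each piece the formula is a sum of absolute values whose arguments change by $\pm 1$ under a unit N/S/E/W step, and at the two vertical interfaces ($x=0$ and $x=\ell$) one verifies directly that the neighbouring formulas agree pointwise and that a unit step across the interface still changes $h^{(\ell)}_0$ by exactly $\pm 1$.

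Granted the ice rule for both $h$ and $h^{(\ell)}_0$, each summand on the right-hand side of the displayed identity lies in $\{-1,+1\}$, so $r(u)-r(v) \in \{-1,0,+1\}$ for any pair of adjacent lattice points. This is the second assertion of the proposition.

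For the first assertion I would interpret ``moving away from the creases'' as a unit step from $v$ to $u$ along which $h^{(\ell)}_0$ strictly increases, so that $h^{(\ell)}_0(u) - h^{(\ell)}_0(v) = +1$. To justify this interpretation I would go region by region: in each of the four open regions NE, NW, SW, SE the relevant piece of the definition of $h^{(\ell)}_0$ is a sum of two absolute values whose arguments have constant sign throughout the region, so $h^{(\ell)}_0$ is affine with gradient $+1$ in exactly the two coordinate directions pointing away from the two bordering creases (and $-1$ in the two toward them); an analogous short check on the creases themselves (northern, western, eastern, southern, and the diagonal central crease) confirms the same identification. Substituting $h^{(\ell)}_0(u)-h^{(\ell)}_0(v)=+1$ into the key identity gives
$$
r(u) - r(v) \;=\; \tfrac{1}{2}\bigl[(h(u)-h(v)) - 1\bigr] \;\in\; \{-1,0\},
$$
which is precisely the weakly decreasing statement. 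The main ``obstacle'' is purely notational book-keeping: one has to keep straight which two of the four coordinate directions count as ``away'' in each region and at each type of crease point, but once a clean case table is written down each case reduces to a one-line arithmetic identity among absolute values.
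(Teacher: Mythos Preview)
Your proposal is correct and follows essentially the same route as the paper: both arguments hinge on the identity $r(u)-r(v)=\tfrac12\bigl[(h(u)-h(v))-(h^{(\ell)}_0(u)-h^{(\ell)}_0(v))\bigr]$, combine it with $h^{(\ell)}_0(u)-h^{(\ell)}_0(v)=+1$ for a unit step away from the creases, and invoke the ice rule for $h$ to conclude $r(u)-r(v)\in\{-1,0\}$. The paper simply treats the NE region explicitly and leaves the remaining cases to the reader, whereas you are more systematic in first verifying the ice rule for $h^{(\ell)}_0$ and in spelling out the meaning of ``moving away'' region by region; these are welcome clarifications but not a different approach.
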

\begin{proof}
Since the creases split configurations into four parts, we shall pick one part, say the NE part, and prove this property. In the NE part, going away from the crease corresponds to increasing the $x$ or $y$ coordinate by one. Consider a pair of neighbouring points, $u=(x,y)$ and $v=(x+1,y)$. Since $h^{(\ell)}_0(v)-h^{(\ell)}_0(u)=1$, one has
\[
r(u)-r(v) = \tfrac12\big(h(u)-h(v) -h^{(\ell)}_0(u)+h^{(\ell)}_0(v)\big) = \tfrac12 \big(h(u)-h(v)+1\big)\ .
\]
Since $|h(v)-h(u)|=1$, we see that $(r(v)-r(u))$ is either $0$ or $-1$. A similar proof shows that this is true for all other cases as well.
\end{proof}
Thus, given a configuration with volume $n$, it can be broken up into  2 plane partitions and 2 skew plane partitions  with volumes $(n_1,n_2,n_3,n_4)$ where $\sum_{j=1}^4 n_j=n$. These plane partitions are not the most general ones as the height condition is stronger than the weakly decreasing  condition imposed for plane partitions (see Appendix B). We illustrate this split  in  Figure \ref{splitconfig} for a random configuration with $\ell=6$ and volume$=120$. 
\begin{figure}[htb]
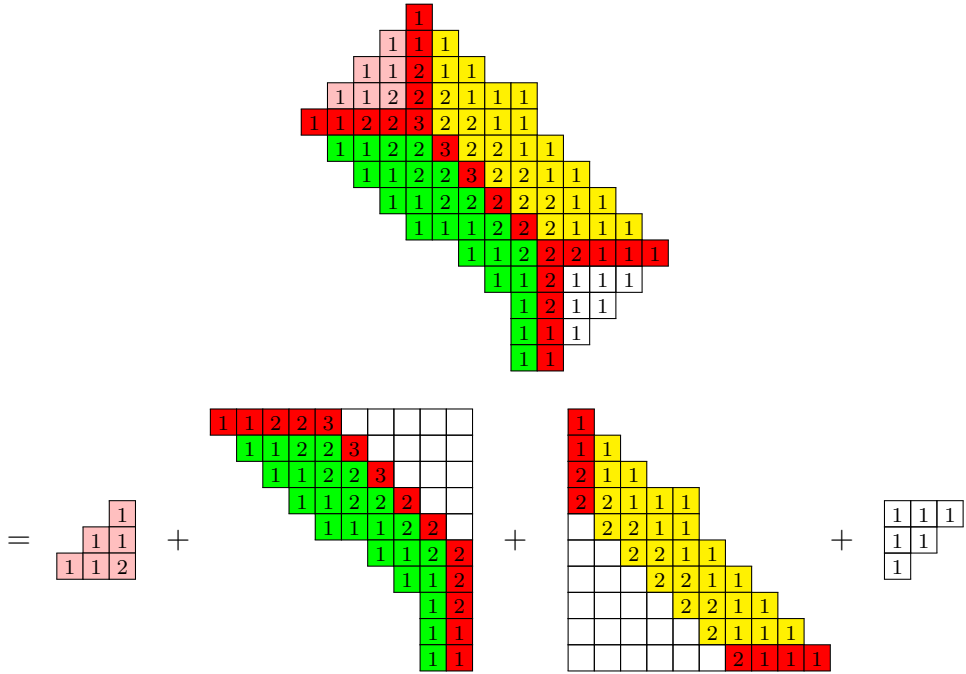

\begin{center}
\ytableaushort{
\none\none\none\none{*(red) 1},
\none\none\none{*(pink) 1} {*(red) 1} {*(yellow) 1},
\none\none{*(pink) 1} {*(pink) 1} {*(red) 2} {*(yellow) 1} {*(yellow) 1},
\none{*(pink) 1}{*(pink) 1}{*(pink) 2}{*(red) 2}{*(yellow) 2}{*(yellow) 1}{*(yellow) 1}{*(yellow) 1},
{*(red) 1}{*(red) 1}{*(red) 2}{*(red) 2}{*(red) 3}{*(yellow) 2}{*(yellow) 2}{*(yellow) 1}{*(yellow) 1},
\none {*(green) 1}{*(green) 1}{*(green) 2}{*(green)  2}{*(red) 3}{*(yellow)2}{*(yellow)2}{ *(yellow)1}{*(yellow) 1},
\none\none{*(green) 1}{*(green) 1}{*(green) 2}{*(green) 2}{*(red) 3}{*(yellow) 2}{*(yellow) 2}{*(yellow) 1}{*(yellow) 1},
\none\none\none{*(green)1}{*(green) 1}{*(green) 2}{*(green) 2}{*(red) 2}{*(yellow) 2}{*(yellow) 2}{*(yellow) 1}{*(yellow) 1},
\none\none\none\none {*(green) 1}{*(green) 1}{*(green) 1}{*(green) 2}{*(red) 2}{*(yellow) 2}{*(yellow) 1}{*(yellow) 1}{*(yellow) 1},
\none\none\none\none\none\none{*(green) 1}{*(green) 1}{*(green) 2}{*(red) 2}{*(red) 2}{*(red) 1}{*(red) 1}{*(red) 1},
\none\none\none\none\none\none\none{*(green)1}{*(green) 1}{*(red) 2}{ 1}{ 1}{ 1},
\none\none\none\none\none\none\none\none{*(green)1}{*(red) 2}{ 1}{ 1},
\none\none\none\none\none\none\none\none{*(green)1}{*(red) 1}{ 1},
\none\none\none\none\none\none\none\none{*(green)1}{*(red) 1}
}
\end{center}
\begin{center}
=\ytableaushort{
%\none\none\none\none{*(yellow) 1},
\none\none\none{*(pink) 1},% {*(yellow) 1} {*(yellow) 1},
\none\none{*(pink) 1} {*(pink) 1}, %{*(yellow) 2} {*(yellow) 1} {*(yellow) 1},
\none{*(pink) 1}{*(pink) 1}{*(pink) 2} %{*(yellow) 2}{*(yellow) 2}{*(yellow) 1}{*(yellow) 1}{*(yellow) 1},
}\quad + \ 
\ytableaushort{
{*(red) 1}{*(red) 1}{*(red) 2}{*(red) 2}{*(red) 3}{}{}{}{}{}, %{*(yellow) 2}{*(yellow) 2}{*(yellow) 1}{*(yellow) 1},
\none {*(green) 1}{*(green) 1}{*(green) 2}{*(green)  2}{*(red) 3}{}{}{}{},%{*(yellow)2}{*(yellow)2}{ *(yellow)1}{*(yellow) 1},
\none\none{*(green) 1}{*(green) 1}{*(green) 2}{*(green) 2}{*(red) 3}{}{}{},%{*(yellow) 2}{*(yellow) 2}{*(yellow) 1}{*(yellow) 1},
\none\none\none{*(green)1}{*(green) 1}{*(green) 2}{*(green) 2}{*(red) 2}{}{},%{*(yellow) 2}{*(yellow) 2}{*(yellow) 1}{*(yellow) 1},
\none\none\none\none {*(green) 1}{*(green) 1}{*(green) 1}{*(green) 2}{*(red) 2}{},%{*(yellow) 2}{*(yellow) 1}{*(yellow) 1}{*(yellow) 1},
\none\none\none\none\none\none{*(green) 1}{*(green) 1}{*(green) 2}{*(red) 2},%{*(yellow) 2}{*(yellow) 1}{*(yellow) 1}{*(yellow) 1},
\none\none\none\none\none\none\none{*(green)1}{*(green) 1}{*(red) 2},%{ 1}{ 1}{ 1},
\none\none\none\none\none\none\none\none{*(green)1}{*(red) 2},%{ 1}{ 1},
\none\none\none\none\none\none\none\none{*(green)1}{*(red) 1},%{ 1},
\none\none\none\none\none\none\none\none{*(green)1}{*(red) 1}
}\quad + \quad
\ytableaushort{
{*(red) 1},
 {*(red) 1} {*(yellow) 1},
{*(red) 2} {*(yellow) 1} {*(yellow) 1},
{*(red) 2}{*(yellow) 2}{*(yellow) 1}{*(yellow) 1}{*(yellow) 1},
{} {*(yellow) 2}{*(yellow) 2}{*(yellow) 1}{*(yellow) 1},
{}{}{*(yellow)2}{*(yellow)2}{ *(yellow)1}{*(yellow) 1},
{}{}{}{*(yellow) 2}{*(yellow) 2}{*(yellow) 1}{*(yellow) 1},
{}{}{}{}{*(yellow) 2}{*(yellow) 2}{*(yellow) 1}{*(yellow) 1},
{}{}{}{}{}{*(yellow) 2}{*(yellow) 1}{*(yellow) 1}{*(yellow) 1},
{}{}{}{}{}{}{*(red) 2}{*(red) 1}{*(red) 1}{*(red) 1}
} +\quad
\ytableaushort{
{ 1}{ 1}{ 1},
{ 1}{ 1},
{ 1}
}
\end{center}
\caption{A random configuration of reduced height function for $\ell=6$ and volume $120$. It is split into two PP's and two skew PP's.} \label{splitconfig}
\end{figure}

\subsection{Exact enumeration}

One would like to ask if there is a simple formula for $a_\ell(n)$ or for its generating function. The first few numbers for width $\ell\leq 5$ were computed by Ben Young and posted in the domino forum  \cite{Propp2014}. We adapted an algorithm due to Bratley and McKay to directly enumerate $a_\ell(n)$. Our initial numbers agree with Young's enumeration. Table \ref{exacttable} in Appendix A provides the the results of our exact enumeration of $a_\ell(n)$  for widths $\ell=1$ to $\ell=6$.

\subsubsection{The $\ell=1$ counting}

There is a natural action of the dihedral group, $D_8$, that is generated  by a rotation by $\frac\pi2$ and a reflection $(x,y)\rightarrow (-x,y)$ in the $xy$=plane. Below we indicate all possible configurations with fixed volume $n=4$ up to an overall action of $D_8$. Every point in $\mathbb{Z}^2$ is represented by a square whose entry is the reduced height at the point. The red square is the origin with the horizontal line the $x$-axis and the vertical line the $y$-axis.
\begin{center}
\begin{ytableau}
 *(red)1 \\ 1 \\  1 \\  1
\end{ytableau}
\qquad
\begin{ytableau}
1 &  *(red)1 & 1 &  1
\end{ytableau}
\qquad
\begin{ytableau}
1 \\  *(red)1 & 1 &  1
\end{ytableau}
\qquad
\begin{ytableau}
1 & 1 \\ *(red)1 & 1 
\end{ytableau}
\qquad
\begin{ytableau}
1  \\ *(red)1 & 1 \\ 1 \\
\end{ytableau}
\qquad
\end{center}
The mulitplicities of the above configurations, (from left to right), under the action of $D_8$ are $4,4,8,4,4$ respectively. Thus there are $24$ configurations with volume equal to $4$. We are interesting in counting the number of configurations with fixed volume $n$. Let $a_1(n)$ denote the number of such configurations. The first few numbers are
\begin{equation*}
%\begin{split}
1, 4, 10, 24, 51, 109, 222, 452, 890, 1732, 3298, 6204, 11470, 20970, 37842, 67572, \ldots
%119368, 208943, 362389, 623438, 1064061, 1802976, 3033711, 5071418, 8424788, \\
%13913192, 22847028, 37315678, 60631940, 98030644, 157743554, 252671288,  \ldots
% 402944731, 639871871, 1011956958
%\end{split}
\end{equation*}
%\subsubsection{Generating function and product formula}
Let $A_\ell(q)=1+\sum_{m=1}^\infty a_\ell(n)q^n$ denote the generating function of the series $a_\ell(n)$, for fixed $\ell$. For $\ell=1$, one has
\begin{align}
A_1(q) &:= 1 + \sum_{n=1}^\infty a_1(n) \ q^n = 1+ q+4q^2+10q^3+24 q^4+\cdots\ ,\\
&=\prod_{m=1}^\infty (1-q^m)^{-b_1(m)}\ ,
\end{align}
where the second line defines $b_1(m)$ for $m=1,2,\ldots$. We have determined $b_1(m)$ for $m\leq60$. The first few numbers are:
\begin{equation}
1, 3, 6, 8, 9, 3, 2, 5, 28, 63, 86, 39, \mathbf{-112, -303, -326}, 109, 1020, 1725, 818,\ldots 
\end{equation}
If all $b_1(m)\geq0$, then one can look for a combinatorial problem that determines $b_1(m)$, thereby determining $A_1(q)$.
However, we see that $b_1(m)$ is not always positive -- the negative terms have been shown in boldface above. This behaviour is similar to what happens for solid partitions where the 
analog of $b_1(m)$ also oscillates between positive and negative values. We suspect that there might be \textbf{no} simple formula for the generating function. A similar situation holds for widths $\ell>1$. 

\subsection{Studying $a_\ell(n)$ for fixed values of $n$}

Given that there is no known analytical formula for the generating function, we next study the situation when $n$, the number of removed oranges, is kept fixed and study the properties as a function of $\ell$.  Using exact data, we find that the following formulae appear to hold for $\ell\geq \lceil{n/2}\rceil$. We set $a_\ell(0)\equiv 1=\binom{\ell}0$.
Using code which, for fixed $\ell$, generates the first few numbers in $a_\ell(n),$ enables us to conjecture the following using fits to the data:
\begin{align*}
a_\ell(2)&= \binom{\ell}2+4\ . \\
a_\ell(3)&= \binom{\ell}3+6\ell \text{ for } \ell\geq 2 \\
a_\ell(4)& = \binom{\ell}4+8 \binom{\ell}2 -\ell+23  \text{ for } \ell\geq 2\ ,\\
a_\ell(5) &=\binom{\ell}5+ 10 \binom{\ell}3 -2\binom{\ell}2+36\ell -14 \text{ for }\ell\geq 3\ , \\
a_\ell(6)&=\binom\ell{6}+ 12\ \binom\ell{4}-3 \binom\ell{3}+ 53 \binom\ell2 -25 \ell +132\text{ for }\ell\geq 3\ , \\
a_\ell(7)&=\binom\ell7+14\ \binom\ell5-4 \binom\ell4+74 \binom\ell3-40 \binom\ell2 +220 \ell -182\text{ for }\ell\geq 4\ , \\
a_\ell(8)&=\binom\ell8+16\ \binom\ell6-5 \binom\ell5+99 \binom\ell4-59 \binom\ell3 +345 \binom\ell2-308 \ell +858\text{ for }\ell\geq 4 \ ,\\
a_\ell(9)&=\binom\ell9+18\ \binom\ell7-6 \binom\ell6+ 128\binom\ell5-82 \binom\ell4 +515 \binom\ell3-488 \binom\ell2 +1463 \ell -1764\text{ for }\ell\geq 5\ .
\end{align*}
For $n=2,3,4$, the formulae have been proved  \cite{VarshaThesis}. The counting is fairly elaborate and does not reflect the simplicity of the above formulae. It hints at the existence of a statistic that refines $a_\ell(n)$ but we have been unable to find one. The na\"ive guess that it counts the number of layers affected by a given configuration does not work. For $5\leq n\leq 9$, the above formulae have been checked to be consistent with exact numbers given in Table 2 for $\ell\leq 20$. Observing their pattern, we conjecture that the following statement holds.

\begin{conj}\label{conj1}
For fixed $n$ and $\ell\geq \lceil{n/2}\rceil$, $a_\ell(n)$ is a polynomial of degree $n$ in $\ell$ such that
\begin{equation}
a_\ell(n)=\sum_{k=0}^\infty g_k(n)\  \binom{\ell}{n-k} \ , 
\end{equation}where $g_k(x)$ is a polynomial of degree $\left\lfloor \frac{k}2\right\rfloor$ in $x$.
\end{conj}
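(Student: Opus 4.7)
The plan is to prove both parts of the conjecture (polynomiality of $a_\ell(n)$ in $\ell$ and the degree bound $\lfloor k/2 \rfloor$ on $g_k$) by introducing a \emph{cluster decomposition} of configurations of fixed volume $n$ that, for $\ell \geq \lceil n/2 \rceil$, is stable under translation along the central crease.

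Building on Proposition 1, I would define a cluster as a maximal connected component of the support $\{(x,y): r(x,y)>0\}$ of the reduced height function. Each cluster has a well-defined footprint on the central crease consisting of a (possibly empty) interval of consecutive integers of some width $w \geq 0$, together with attached shapes in the adjoining NE and SE regions (and, for clusters sitting purely at the NW, SW, SE corners or along the non-central creases, some additional local data). The elementary bound $v \geq w$, valid because every column in the footprint contributes volume at least $1$, shows that the total footprint width of all clusters along the central crease is at most $n$. The hypothesis $\ell \geq \lceil n/2 \rceil$ is precisely what is needed to ensure that every combinatorially admissible arrangement of clusters with volumes summing to $n$ can be embedded into a crease of length $\ell$ with no boundary artefact from the endpoints of the central crease and no interaction between clusters living on opposite sides of the crease (the factor of $2$ accounting for footprints that sandwich the same crease column from the NE and SE sides).

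Next, I would fix an unordered multiset $\mathcal{M}$ of cluster shapes with total volume $n$ and total footprint-width $w_{\mathrm{tot}}$, and count configurations whose shape-multiset equals $\mathcal{M}$. For $\ell \geq \lceil n/2 \rceil$, the number of such placements is a polynomial in $\ell$ that decomposes as a linear combination of binomials $\binom{\ell}{j}$ with $j \leq w_{\mathrm{tot}}$. Summing over all shape-multisets of fixed total volume $n$, and reorganising the sum by $k := n - w_{\mathrm{tot}}$, gives the expansion
\[
a_\ell(n)=\sum_{k\geq 0} g_k(n)\binom{\ell}{n-k},
\]
where $g_k(n)$ is interpreted (with inclusion–exclusion signs coming from adjacency of clusters on the crease) as the count of cluster-multisets of total volume $n$ and total footprint-width $n-k$, plus contributions from clusters sitting entirely on the non-central creases. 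This already establishes that $a_\ell(n)$ is a polynomial of degree $n$ in $\ell$ (the leading $\binom{\ell}{n}$ coming from the $n$ unit-height clusters case, matching $g_0(n)=1$).

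The final and hardest step is to establish that $g_k(n)$ is polynomial in $n$ of degree exactly $\lfloor k/2\rfloor$. The excess volume $k = n - w_{\mathrm{tot}}$ must be deposited either on top of existing footprint columns or in separate non-central-crease clusters of total volume at most $k$. The plane-partition and skew-plane-partition structure of the NE, NW, SW, SE parts (Proposition 1), combined with the restricted plane partitions introduced in Appendix B, forces the excess distribution to resemble a plane partition of $k$, and a counting over which of the $w_{\mathrm{tot}} = n - k$ base columns carry the excess is polynomial in $n$ with degree equal to the maximum number of base columns that can support a fixed excess. Since a single ``stack'' of height $h$ above the crease contributes volume roughly $\binom{h+1}{2}$, an excess of $k$ can be supported by at most $\lfloor k/2 \rfloor$ separated base columns, giving the claimed degree. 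The main obstacle is making this degree estimate sharp: one needs a tight bijection (or injection with controlled fibres) between cluster shapes of excess $k$ and a restricted plane-partition class whose enumeration is demonstrably polynomial of degree $\lfloor k/2\rfloor$ in the number of base columns. The polynomiality in $\ell$ and the $O(k)$ degree bound on $g_k$ should come from a moderately careful combinatorial setup, but pinning down the sharp $\lfloor k/2 \rfloor$ appears to be the heart of the problem and is very likely the reason the statement is posed only as a conjecture.
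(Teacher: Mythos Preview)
The paper contains \emph{no proof} of this statement: it is explicitly labelled a Conjecture, supported only by the empirical formulae for $a_\ell(n)$ with $2\le n\le 9$ and the observation that those nine cases fit the claimed pattern. The paper even notes that the proved cases $n=2,3,4$ were handled by ``fairly elaborate'' counting that ``does not reflect the simplicity of the above formulae,'' and that the hoped-for refining statistic has not been found. So there is nothing to compare your proposal against.

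That said, as a programme your cluster decomposition is reasonable and would likely deliver the polynomiality-in-$\ell$ part: once $\ell$ is large enough that no configuration of volume $n$ can feel both ends of the central crease, translation invariance along the crease does reduce the count to placing finitely many translation classes into $\ell$ slots, yielding a polynomial in $\ell$. Your justification of the precise threshold $\ell\ge\lceil n/2\rceil$ is not yet tight --- the ``factor of $2$'' explanation via NE/SE sandwiching needs to be made into an actual bound showing that no single cluster can have footprint width exceeding roughly $n/2$, or that interactions across the two crease endpoints are already excluded at this threshold. The genuinely incomplete part, as you yourself flag, is the degree bound $\lfloor k/2\rfloor$ on $g_k$: the heuristic that a height-$h$ stack contributes volume $\binom{h+1}{2}$ and hence excess $k$ supports at most $\lfloor k/2\rfloor$ columns is suggestive but does not by itself control the polynomial degree of the signed inclusion--exclusion sum you describe. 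Your honest assessment that this is ``the heart of the problem'' and the reason the statement is only a conjecture is accurate and matches the paper's own stance.
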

The first nine values of $n$ enables us to determine some of the polynomials to be as follows:
\begin{multline}
 a_\ell(n)=\tbinom\ell{n}+ 2n\ \tbinom\ell{n-2}-(n-3) \tbinom\ell{n-3}  +(2n^2-5n+11)\tbinom\ell{n-4}\\
 -(2n^2-11n+19)\tbinom\ell{n-5}+\frac16 ( 8 n^3 - 57 n^2+ 253 n-402 ) \tbinom\ell{n-6}+\cdots \ ,
 \end{multline}
 with $\binom{\ell}{x}=0$ for $x<0$.
%or
%\begin{multline}
% a_\ell(n)=\tbinom{\ell-1}{n}+\tbinom{\ell-1}{n-1}+ 2n\ \tbinom{\ell-1}{n-2}+(n+3) \tbinom{\ell-1}{n-3}  +(2n^2-6n+8)\tbinom{\ell-1}{n-4}\\
% +6(n-3)\tbinom{\ell-1}{n-5}+\frac16 ( 8 n^3 - 69 n^2+ 319 n-516 ) \tbinom{\ell-1}{n-6}+\cdots 
% \end{multline}

\section{Asymptotics of $a_\ell(n)$}

As we have seen, it appears that we cannot come up with  a simple formula for the generating function for $a_\ell(n)$. With this in mind, we study their behaviour at large $n,$ keeping the width $\ell$ fixed. We first establish that for $\ell\ll n^{1/3}$ and $n\rightarrow \infty$ that  $n^{-2/3}\ \log a_\ell(n)$ is bounded. The proof follows a method similar to the one used to bound higher dimensional partitions  \cite{Bhatia:1997}. We then use Monte Carlo simulations to study the asymptotic behaviour more precisely.

\subsection{Bounds  on $a_\ell(n)$ for fixed $\ell$}

\begin{prop} \label{monotone} For $n\geq2$, the inequality, $a_\ell(n)> a_\ell(n-1)$, holds.
\end{prop}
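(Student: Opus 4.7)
The plan is to construct an injection $\iota$ from volume-$(n{-}1)$ configurations to volume-$n$ configurations and to exhibit at least one volume-$n$ configuration not in its image. First, a cell $p$ is \emph{addable} to $r$ (meaning $r+e_p$ is still a valid configuration) precisely when $p$ is a local minimum of the height function $h_r := h_0^{(\ell)}+2r$: incrementing $r(p)$ raises $h_r(p)$ by $2$, and the ice rule $|h(u)-h(v)|=1$ is preserved iff every neighbour of $p$ originally had $h_r$-value $h_r(p)+1$. Because $h_r$ coincides with $h_0^{(\ell)}$ outside a finite set and $h_0^{(\ell)}\to\infty$ at infinity, $h_r$ attains a global minimum, which is in particular a local minimum, so an addable cell always exists.

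Fix a total order $\prec$ on $\mathbb{Z}^2$ (say, lexicographic in $(x,y)$) and let $p(r)$ denote the $\prec$-smallest addable cell of $r$. Setting $\iota(r) := r+e_{p(r)}$ yields a well-defined map into volume-$n$ configurations. To check injectivity, suppose $\iota(r_1)=\iota(r_2)=r'$ with $p_i := p(r_i)$. Since incrementing at a local minimum turns that cell into a local maximum of the resulting height function, both $p_1$ and $p_2$ are local maxima of $h_{r'}$; the ice rule forbids two local maxima of $h_{r'}$ to be adjacent, so $p_1$ and $p_2$ are non-adjacent. A direct computation then shows that the local minima of $r_i$ are exactly $\{p_i\}\cup\{q : q\text{ is a local min of }r'\text{ not adjacent to }p_i\}$. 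If $p_1\neq p_2$, the $\prec$-minimality of each $p_i$, combined with the weakly-decreasing-away-from-creases and $1$-Lipschitz properties from the earlier proposition, is used to exhibit a local minimum of $r'$ lying strictly to the $\prec$-southwest of at least one $p_i$ and not adjacent to it, contradicting the canonical choice. Hence $p_1=p_2$ and $r_1=r_2$.

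For strict inequality, I exhibit, for each $n\geq 2$, a configuration of volume $n$ outside the image of $\iota$. When $\ell=1$ and $n=3$, the L-shape $r' = e_{(0,0)}+e_{(1,0)}+e_{(0,1)}$ serves: its local maxima are $(1,0)$ and $(0,1)$, but for either candidate predecessor $r'-e_{(1,0)}$ or $r'-e_{(0,1)}$ a direct computation shows that the $\prec$-smallest local minimum of the associated height function is $(-1,0)$, not the removed cell, so $r'$ is not in the image. Analogous asymmetric ``corner'' configurations biased towards the $\prec$-large direction supply the required excess for larger $n$ and $\ell$. The principal obstacle lies in the injectivity step: carefully controlling the interplay between the $\prec$-minimal canonical choice and the crease-based monotonicity from the earlier proposition is needed to rule out hypothetical pairs $(r_1,r_2)$ with $p_1\neq p_2$ yielding the same $r'$.
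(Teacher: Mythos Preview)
Your framework is reasonable, but the proof has a genuine gap precisely where you flag it: the injectivity of $\iota$ is never actually established. The sentence ``the $\prec$-minimality of each $p_i$, combined with the weakly-decreasing-away-from-creases and $1$-Lipschitz properties \ldots\ is used to exhibit a local minimum of $r'$ lying strictly to the $\prec$-southwest of at least one $p_i$ and not adjacent to it'' is a promise, not an argument. Concretely, suppose $p_1\prec p_2$ are two local maxima of $r'$. Your own description of the local minima of $r_i$ shows that what you need is: some local minimum $q$ of $h_{r'}$ satisfies $q\prec p_2$ and $q\notin N(p_2)$. But a local maximum $p_2=(a,b)$ has only two neighbours $(a-1,b),(a,b-1)$ that are $\prec p_2$ in lexicographic order, and nothing you have written rules out the possibility that \emph{every} local minimum of $h_{r'}$ lying $\prec p_2$ is one of those two cells. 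The crease-monotonicity of the \emph{reduced} height does not translate directly into a statement locating local minima of the \emph{full} height function $h_{r'}$, so the appeal to Proposition~2.1 does not close the gap. Your strict-inequality step is likewise only verified for one small case and then asserted in general.

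The paper sidesteps all of this by choosing a far more rigid canonical cell: rather than the $\prec$-smallest local minimum, it always works along a single fixed ray, the eastern crease $\{(x+\ell-1,\ell-1):x>0\}$. Given any configuration, one can always append a $1$ at the first vacant position along this ray, and conversely any configuration either has its outermost nonzero entry on this ray removable (the ``removable $1$-part'') or it does not. This makes the injection and its one-sided inverse completely explicit---no analysis of local extrema or interaction between distant cells is needed---and the configurations with nothing on the positive part of the ray immediately witness strict inequality. Replacing your lexicographic local-minimum rule with this one-dimensional ``append at the end of a fixed ray'' rule would rescue your outline with essentially no extra work.
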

\begin{proof}
Pick a configuration, $\lambda$, with volume $n$ and let $x>0$ be the largest value of $y$ such that $r(y+\ell-1,\ell-1)=1$.  If by setting $r(x+\ell-1,\ell-1)=0$, we obtain a valid configuration with volume $(n-1)$, we say that $\lambda$ has a removable $1$-part located at $(x,0)$. If $\lambda$ has a removable 1-part, then setting $r(x+\ell-1,\ell-1)=0$ corresponds to removing the $1$-part.  For example, for $\ell=1$,
\begin{ytableau}
1 &  *(red)1 & 1 &  1
\end{ytableau} has a removable $1$-part at $(2,0)$ while \ytableaushort{111{*(red) 1}}\   has no removable $1$-part. For $n>1$, adding a $1$-part to every configuration with volume $(n-1)$ generates  all configurations with volume $n$ with a removable $1$-part. 
Thus, one has
\begin{equation}
a_\ell(n) = a_\ell(n-1) + a_\ell (n| \text{no removable $1$-part})\ > a_\ell (n-1)\text{ for  } n\geq2\ .
\end{equation}
\textbf{Remarks:} Given a configuration of volume $(n-1)$, it is always possible to add a removable one-part to create a unique configuration of volume $n$ that has a removable one-part. For every $n>1$, there exists at least one configuration without a removable one-part. Consider a configuration with $r(x+\ell-1,\ell-1)=0$ for all $x>0$ and $r(\ell-1,\ell-1)=1$.
(This proof has been adapted from a proof showing that $p(n)>p(n-1),$ where $p(n)$ is the number of partitions of $n,$ given in  \cite[see chap. 3]{Andrews2004}.)
\end{proof}

\begin{figure}[htb]
\begin{center}
%\begin{tikzpicture}[scale=1] %[x=(210:1cm), y=(-30:1cm), z=(90:1cm),scale=0.7]
%\foreach \L in {1,...,5}{
%  \foreach \a in {1,...,7}{
%\node at (\a,\L) {\includegraphics[height=1cm]{orange3.png}} ;
%}}
%\end{tikzpicture}
\includegraphics[height=2in]{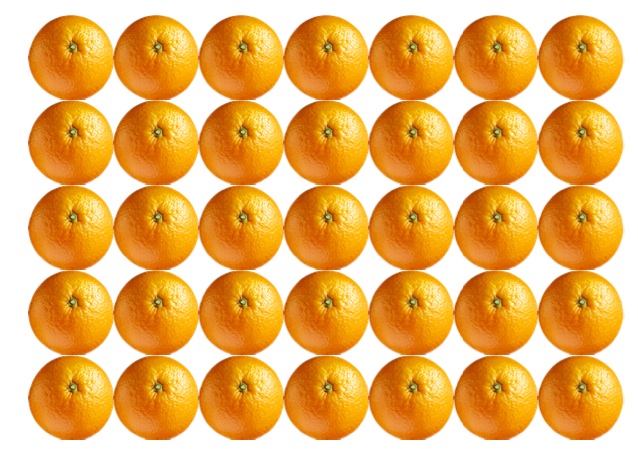}
\end{center}
\caption{The fifth layer of a stack with $\ell=3$.} \label{orangestacklayer}
\end{figure}

\begin{prop} As $n\rightarrow \infty$ and $\ell\ll n^{1/3}$, one has $\log a_\ell(n)> c_L\ n^{2/3}$ where $c_L=3^{2/3}\log2\approx 1.44$.
\end{prop}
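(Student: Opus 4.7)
The plan is to mimic the classical strategy for lower-bounding higher-dimensional partitions, as in \cite{Bhatia:1997}: construct an explicit exponentially-large family of valid configurations, apply pigeonhole to find many sharing a common volume, and upgrade to the given $n$ using the monotonicity of $a_\ell$ from Proposition~\ref{monotone}. Since $\ell \ll n^{1/3}$, I would place the entire construction in the NE quadrant alone, setting $r \equiv 0$ on the other three parts; the resulting configurations are automatically valid and the restriction only shrinks the count, so this gives a legitimate lower bound. Within the NE quadrant, the reduced height $r$ is simply a nonnegative integer function that is weakly decreasing in both coordinates and Lipschitz-$1$ (by the ice rule).

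Fix a parameter $k$ to be optimised. The core combinatorial step is to construct, for each $\sigma \in \{0,1\}^{k \times k}$, a distinct valid $r_\sigma$ supported in a box of side $O(k)$ and with volume at most $Ak^3$ for some explicit absolute constant $A$. A natural template is to start from a base pyramid $\max(K - x - y, 0)$ with $K = O(k)$, subdivide its interior into a $k \times k$ array of well-separated pockets, and in each pocket insert or omit a small local Lipschitz-$1$ bump according to the bit $\sigma_{ij}$; buffer strips between pockets guarantee the ice rule at the interfaces, while the localization makes the map $\sigma \mapsto r_\sigma$ injective. The $2^{k^2}$ configurations then realise at most $Ak^3+1$ distinct values of the volume, so by pigeonhole some $n^* \in [0, Ak^3]$ satisfies $a_\ell(n^*) \geq 2^{k^2}/(Ak^3+1)$, and Proposition~\ref{monotone} gives $a_\ell(n) \geq a_\ell(n^*)$ for all $n \geq n^*$. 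Choosing $k$ to be the largest integer with $Ak^3 \leq n$, so that $k^2 \sim (n/A)^{2/3}$, yields
\[
\log a_\ell(n) \;\geq\; k^2 \log 2 \;-\; \log(Ak^3+1) \;=\; A^{-2/3} (\log 2)\, n^{2/3} \;+\; O(\log n),
\]
and tuning the construction so that $A = 1/3$ produces the asserted constant $c_L = 3^{2/3} \log 2$.

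The delicate point is the explicit combinatorial construction in the second step. The naive attempt---making independent $\{0, 1\}$ perturbations of a fixed base height function at each of $k^2$ grid cells---fails, because the Lipschitz-$1$ condition forces any such perturbation to be coordinate-monotone, collapsing the admissible bit patterns to Young diagrams inside a $k \times k$ box and giving only $\binom{2k}{k}$ configurations. To achieve $2^{k^2}$ genuinely independent bits one must separate the bumps spatially (using buffers of size $\geq 2$) and choose their geometry so that the volume contribution per active bump makes the constant $A$ come out to exactly $1/3$; verifying simultaneously the validity of $r_\sigma$, the injectivity of $\sigma \mapsto r_\sigma$, and the tight volume accounting is the combinatorially nontrivial part of the argument.
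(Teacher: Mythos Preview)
Your high-level strategy---build an exponentially large family of valid configurations of bounded volume, then apply pigeonhole together with the monotonicity of Proposition~\ref{monotone}---is exactly what the paper does. However, your proposed construction is both unnecessarily complicated and, as written, incapable of reaching $A=1/3$.

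The paper bypasses the entire ``delicate point'' you worry about by working directly in the orange-stack interpretation rather than in the reduced height function on a single quadrant. Once the top $m$ layers are completely removed, the $(m+1)$-th layer is an $(m+1)\times(m+\ell)$ rectangle of $n_0=(m+1)(m+\ell)$ oranges, \emph{each of which can be removed or kept independently}, since oranges in the same layer impose no constraints on one another. This yields $2^{n_0}$ valid configurations with volumes in an interval of length $n_0$, with no buffers, no pockets, and no tuning required. Since the full volume after removing $m{+}1$ layers is $n=\tfrac16(m+1)(m+2)(2m+3\ell)\sim m^3/3$, one gets $n_0\sim(3n)^{2/3}$ and hence $c_L=3^{2/3}\log 2$ immediately.

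Your restriction to the NE quadrant is what actually breaks the constant. The layer structure spans all four quadrants; confining yourself to one of them keeps only about a quarter of the independently-togglable oranges while retaining about a quarter of the volume, and $\tfrac14/\bigl(\tfrac14\bigr)^{2/3}=4^{-1/3}<1$, so the resulting constant drops to roughly $(3/4)^{2/3}\cdot 3^{2/3}\log 2\approx 0.91$, not $1.44$. Your buffered-bump variant is even less efficient, since the buffers inflate $K$ relative to $k$ and hence the volume relative to the number of bits. The assertion that one can ``tune the construction so that $A=1/3$'' within a single quadrant is therefore not just unverified but false for the template you describe; to hit the stated constant you must use the full layer across all four parts, which is precisely the paper's construction.
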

\begin{proof}
Consider the following special configuration with $m$ layers (of oranges) completely removed. The $(m+1)$-th layer consists of $n_0=(m+1)(m+\ell)$ oranges that can all be removed independently of each other. By removing some or all of the oranges in the $(m+1)$-th layer, one creates $2^{n_0}$ configurations with volume in the range $[n-n_0, n]$ where 
$$
n=\sum_{k=1}^{m+1} k (k+\ell -1) = \tfrac16 (1 + m) (2 + m) (2 m+3\ell)\ .
$$ 
We express $m$ in terms of $n$ by inverting the above expression to obtain \[m= (3n)^{1/3} -\frac{(\ell+2)}2 +O(n^{-1/3})\ .\] Similarly, we can see that $n_0 = (3n)^{2/3} + (3n)^{1/3} + O(1)$. 
 Since these $2^{n_0}$ configurations  do not exhaust all possible configurations, one has
 \begin{equation*}
 \sum_{n'=n-n_0}^n a_\ell(n') > 2^{n_0}\ .
 \end{equation*}
 Since $a_\ell(n'+1)> a_\ell(n')$ for $n'>1$ from Proposition \ref{monotone}, we obtain
 \begin{equation*}
n_0\ a_\ell(n)>  \sum_{n'=n-n_0}^n a_\ell(n') > 2^{n_0}\ .
 \end{equation*}
We thus get the following lower bound
 \begin{align}
 \log a_\ell(n) & > (\log 2)\ n_0 -\log n_0  \nonumber \\ 
 &> (\log 2)\ n_0 = (\log 2) (3n)^{2/3} + O(n^{1/3}) =: c_L \ n^{2/3}+ O(n^{1/3})  \ ,
\end{align}
with $c_L=3^{2/3}\log2\approx 1.4418$.
\end{proof}

\begin{prop}
As $n\rightarrow \infty$ and $\ell\ll n^{1/3}$, one has $\log a_\ell(n) < c_U\ n^{2/3}$ where $c_U = 3\zeta(3)^{1/3}\approx 3.1898$.
\end{prop}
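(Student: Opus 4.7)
The plan is to combine the decomposition in Proposition~2.1 with a standard Chernoff/saddle-point estimate on the MacMahon generating function
\[
M(q)=\prod_{k\geq 1}(1-q^k)^{-k}.
\]
By Proposition~2.1 every configuration of volume $n$ decomposes into its four crease-separated components, of respective volumes $n_1,n_2,n_3,n_4$ with $n_1+n_2+n_3+n_4=n$; two of these are plane partitions (NE and SW) and two are skew plane partitions with staircase inner shape (NW and SE). The unit-step condition from Proposition~2.1 is strictly stronger than the weakly-decreasing condition defining a (skew) plane partition, so in any single component of volume $m$ the number of admissible configurations is at most the plane-partition count $p_2(m)=[q^m]M(q)$. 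Summing over the four independent components yields the key generating-function inequality
\begin{equation}
A_\ell(q)\;\leq\;M(q)^4.
\end{equation}

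Given this, I would apply a Markov/Chernoff bound: for every $s>0$, setting $q=e^{-s}$,
\begin{equation}
a_\ell(n)\;\leq\;A_\ell(e^{-s})\,e^{ns}\;\leq\;M(e^{-s})^4\,e^{ns},
\end{equation}
so $\log a_\ell(n)\leq 4\log M(e^{-s})+ns$. A standard Mellin-transform computation (cf.\ \cite{Bhatia:1997}) gives
\[
\log M(e^{-s})\;=\;\frac{\zeta(3)}{s^2}+O\!\bigl(\log(1/s)\bigr)\qquad(s\to 0^+).
\]
Minimising the leading expression $4\zeta(3)/s^2+ns$ in $s>0$ produces the optimum $s^{\ast}=2(\zeta(3)/n)^{1/3}$, and substituting gives
\begin{equation}
\log a_\ell(n)\;\leq\;3\,\zeta(3)^{1/3}\,n^{2/3}+o(n^{2/3})\;=\;c_U\,n^{2/3}\bigl(1+o(1)\bigr),
\end{equation}
as required. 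The hypothesis $\ell\ll n^{1/3}$ is what keeps the $\ell$-dependent corrections (the lengths of the creases at the optimal scale $1/s^{\ast}\sim n^{1/3}$) buried in the $o(n^{2/3})$ error.

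The main obstacle I anticipate is justifying $A_\ell(q)\leq M(q)^4$ for the two skew components, whose outer shape is infinite while the inner (staircase) shape carries $\ell$-dependent data. The two plane-partition factors are immediate from MacMahon's theorem, but for each skew piece a separate argument is needed---most cleanly an injection from a skew configuration of volume $m$ into an ordinary plane partition of volume $m$, for instance by absorbing the entries along the N/W (respectively E/S) crease into a first row or first column of a new plane partition. Because we only need the leading order $n^{2/3}$, even a loose injection suffices to preserve the constant $c_U=3\zeta(3)^{1/3}$; any polynomial-in-$\ell$ overcounting would be harmless under the standing assumption $\ell\ll n^{1/3}$.
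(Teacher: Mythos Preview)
Your overall strategy is sound and arrives at the same constant as the paper, but the route differs in one substantive place.

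On the asymptotic extraction: where you apply a Chernoff/saddle-point bound to $M(q)^4$ and optimise in $s$, the paper works directly with the fourfold convolution $\sum_{n_1+\cdots+n_4=n}\prod_j p_2(n_j)$, bounds it by $p(n\mid 4\text{ parts})\cdot p_2(n/4)^4$ using monotonicity of $p_2$, and then invokes Wright's asymptotic $\log p_2(m)\sim\tfrac32(2\zeta(3))^{1/3}m^{2/3}$. These are standard equivalents and both yield $3\zeta(3)^{1/3}n^{2/3}$; your version is arguably cleaner.

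The real content, as you anticipate, is the skew pieces, and here the paper does \emph{not} do what you propose. The coefficient-wise inequality underlying $A_\ell(q)\le M(q)^4$ is false as stated: already for the staircase $\mu=(1)$ there are two skew plane partitions of volume $1$ but $p_2(1)=1$, and the unit-step restriction does not rescue this. Your suggested ``absorb the crease into a first row/column'' map is not obviously injective or volume-preserving. What the paper actually does is use the unit-step condition in a different way: a height $h$ at any site forces a pyramid of volume $\sim h^3/3$, so every entry is at most $(3n)^{1/3}$. One then fills the $\binom{\ell+1}{2}$ cells of the staircase $\mu_\ell$ with this maximum value, obtaining an injection of each skew component of volume $m$ into ordinary plane partitions of volume at most $m+\tfrac{\ell^2}{2}(3n)^{1/3}$. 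This volume shift is $o(n)$ precisely when $\ell\ll n^{1/3}$, so at your optimal $s^\ast\sim n^{-1/3}$ it contributes only $O(\ell^2)=o(n^{2/3})$ to $\log a_\ell(n)$. That is where the hypothesis on $\ell$ genuinely enters, and it is the step your sketch is missing.
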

\begin{proof}
Let $p_2(n)$ denote the number of plane partitions of $n$ and $\hat{p}_2^{(\ell)}(n)$ denote the number of skew plane partitions of shape $\lambda /\mu_\ell$,  where $\mu_\ell$ is the Ferrers diagram for  partition $(\ell,\ell-1,...,1)$ and $\lambda$ the Ferrers diagram of a partition containing $\mu_\ell.$
We obtain the following upper-bound for $n\gg 1$.
\begin{align*}
a_\ell(n) & <  \sum_{\substack{n_i\in \mathbb{Z}_+\\\sum_i n_i = n}}\  \ \prod_{j=1}^2 p_2(n_j) \prod_{j=3}^4 \hat{p}^{(\ell)}_2(n_j)\ , \\
&<  \sum_{\substack{n_i\in \mathbb{Z}_+\\\sum_i n_i = n}}\   \prod_{j=1}^2 p_2(n_j) \ \prod_{j=3}^4 p_2(n_j+\tfrac{\ell^2}{2} (3n)^{1/3})\ , %\\
%&< \  p(n|4\text{ parts}) \prod_{j=1}^2 p_2(n_j) \ \prod_{j=3}^4 p_2\left (n_j+\ell^2 (3n)^{1/3}\right )\
\end{align*}
where in the second line, we have replaced the counting of skew plane partitions to plane partitions by filling in $\mu_\ell$ with the largest possible value which can be estimated to be $(3n)^{1/3}$.
Since $\ell \ll n^{1/3}$, we assume that it is $O(1)$. Since $p_2(n)$ is a monotonically increasing function of $n$, it follows  that among all partitions of $n$ into four parts, the largest term in the  above product occurs when all $n_i$ are equal. Thus one has $\ell^2 n^{1/3} \ll n_j$ for $j=3,4$. Using this, we obtain
\begin{align}
a_\ell(n) < \  p(n|4\text{ parts})\  p_2\!\left(\tfrac{n}4\right)^4\ .
\end{align}
where $p(x|4\text{ parts})=O(x^3)$ is the number of partitions of $x$ into four parts. 
%$p_2(x)$ denotes the number of plane partitions of $x$ and 
Taking logarithms and discarding terms that grow as $\log n$ that arise from $p(n|4\text{ parts})$, we obtain
%In the limit of large $n$, the maximum contribution to the right hand side of the above expression occurs when
%$n_1=n_2=n_3=n_4 = \frac{n}4 (1 + O(n^{-1/3}))$. For $n\gg 1$, we obtain
\begin{equation}
\log a_\ell(n) < 4 \log p_2\left(\tfrac{n}4\right) \sim 3 \zeta(3)^{1/3}  \ n^{2/3} = 3.1898\ n^{2/3} \ ,
\end{equation}
on using $\log p_2(n) \sim \tfrac32 (2\zeta(3))^{1/3}\ n^{2/3},$ see  \cite{Wright1931}.
\end{proof}
Combining our lower and upper bounds, we obtain  the following bounds:
\begin{equation}
\boxed{
3^{2/3}\log2<  n^{-2/3}\ \log a_\ell(n) < 3 \zeta(3)^{1/3} \ .
}
\end{equation}
This suggests that $n^{-2/3}\log a_\ell(n)\rightarrow \text{constant}$ as $n\rightarrow \infty$.
\begin{conj}
For $\ell\ll n^{1/3}$, $n^{-2/3} \log a_\ell(n)\sim $ an $\ell$-independent constant as $n\rightarrow\infty$.
\end{conj}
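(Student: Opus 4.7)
The plan is twofold: first establish existence of the limit $C_\ell := \lim_{n \to \infty} n^{-2/3} \log a_\ell(n)$ for each fixed $\ell$, and then show that $C_\ell$ does not depend on $\ell$.

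For existence, I would argue by a quasi-superadditivity estimate. Any valid configuration of volume $n$ is supported on a region of $\mathbb{Z}^2$ of diameter $O(n^{1/3})$ (as the top-layer construction in the preceding lower-bound proposition already indicates). Given two configurations of volumes $n$ and $m$, one may translate them so their supports lie in disjoint regions separated by a "moat" on which the reduced height is zero; the ice rule is preserved trivially across the moat, and the combined configuration has volume $n+m$. This yields
\begin{equation*}
a_\ell(n+m)\ \geq\ \Phi(n,m)^{-1}\ a_\ell(n)\, a_\ell(m),
\end{equation*}
where $\Phi(n,m)$ is a polynomial overcounting factor coming from the choice of translation. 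A Fekete-type argument adapted to the sub-linear exponent $2/3$ then yields convergence of $n^{-2/3}\log a_\ell(n)$ to some constant $C_\ell \in [c_L,c_U]$.

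For $\ell$-independence I would sharpen the upper and lower bounds to matching leading-order asymptotics, both of which turn out to be manifestly independent of $\ell$. The upper bound is nearly sharp already: the correction $\tfrac12 \ell^2 (3n)^{1/3}$ added to each $n_j$ in the skew-to-straight comparison contributes only $o(n^{2/3})$ to $\log a_\ell(n)$ when $\ell \ll n^{1/3}$, so $C_\ell \leq 3\zeta(3)^{1/3}$ uniformly. For the matching lower bound, count configurations whose reduced height vanishes on all five creases: the four regions then decouple into two plane partitions and two skew plane partitions of shape $\mu_\ell$, each subject only to the stronger local condition of Proposition~2.1. Summing $p_2(n_1)\,p_2(n_2)\,\hat p_2^{(\ell)}(n_3)\,\hat p_2^{(\ell)}(n_4)$ over $n_1+n_2+n_3+n_4 = n$ and optimizing at the balanced point $n_j = n/4$ recovers the leading term $3\zeta(3)^{1/3}\,n^{2/3}$; the skew shape $\mu_\ell$ contributes only an $O(\ell^2 n^{1/3})$ error. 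Combined, this forces $C_\ell = 3\zeta(3)^{1/3}$ for every $\ell$.

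The main obstacle is verifying that the stronger local condition of Proposition~2.1 (unit change under every N/S/E/W step, not merely weak decrease along rows and columns) does not reduce the plane-partition count by more than $\exp(o(n^{2/3}))$; equivalently, the restricted subclass of plane partitions discussed in Appendix~B must share Wright's leading asymptotic $\tfrac32 (2\zeta(3))^{1/3}n^{2/3}$. A variational/limit-shape argument in the spirit of Cerf--Kenyon should work, since the ice-rule constraint excludes only a measure-zero set of admissible local slopes in the surface-tension sense, leaving the macroscopic entropy density unchanged; turning this heuristic into a rigorous matching upper bound on the restricted count is the technical heart of the proof. A secondary issue is ensuring that forcing the reduced height to vanish on all five creases costs only a sub-leading factor, which amounts to a standard boundary-effect estimate since each crease has length $O(n^{1/3})$ and bounded local entropy.
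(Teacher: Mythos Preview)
The statement is a \emph{conjecture} in the paper, not a theorem; the paper offers only a one-sentence heuristic (a generic configuration occupies a rectangle of side $\sim(3n)^{1/3}$, so the $\ell$-dependence is suppressed by a power of $n^{1/3}$) together with Monte Carlo evidence. There is no proof in the paper to compare your argument against.

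Your proposal, however, contains a genuine error that is detectable without any outside input: you claim to prove $C_\ell = 3\zeta(3)^{1/3}\approx 3.19$, but the paper's Monte Carlo simulations give $c_0\approx 2.344$, well inside the interval $(c_L,c_U)$ and far from the upper bound. So at least one of your two matching bounds must be wrong. In fact both fail:

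\begin{itemize}
\item \textbf{Decoupling lower bound.} Forcing the reduced height to vanish on all creases gives only the zero configuration: by Proposition~2.1 the reduced height is weakly decreasing away from the creases, so its maximum is attained on the creases; if that maximum is $0$ then $r\equiv 0$. There is no non-trivial family to sum over, and your claimed lower bound $\sum p_2(n_1)p_2(n_2)\hat p_2^{(\ell)}(n_3)\hat p_2^{(\ell)}(n_4)$ is vacuous. More generally the four regions share their crease values, so they never decouple into an unconstrained product; the product form is what makes the \emph{upper} bound an upper bound.
\item \textbf{Restricted vs.\ unrestricted plane partitions.} Your heuristic that the ice rule ``excludes only a measure-zero set of admissible local slopes'' is not right: the constraint $|r(u)-r(v)|\le 1$ forbids every local gradient exceeding $1$ in any coordinate direction, which is an open set of slopes in the Cerf--Kenyon surface-tension picture, not a null set. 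There is no reason to expect $pr(n)$ and $p_2(n)$ to share the same leading constant, and Appendix~B pointedly leaves this open.
\item \textbf{Superadditivity.} The translation argument for existence of the limit also breaks: configurations are anchored to the fixed, non-translation-invariant reference $h_0^{(\ell)}$, and the reduced height must decrease away from the creases, so you cannot place two configurations in disjoint regions separated by a moat of zeros. Whatever Fekete-type statement is true here, it is not the one you wrote down.
\end{itemize}

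In short, the conjecture remains open, and your route would need a completely different lower-bound mechanism---one that genuinely captures the interaction across the creases---before it could hope to pin down the constant.
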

A heuristic proof of $\ell$-independence is as follows. Since $\ell\ll n^{1/3}$, arguments similar to those that lead to the lower bound show that a generic random configuration will be a rectangle of side $(3n)^{1/3} \big[(3n)^{1/3}+\ell]\sim (3n)^{2/3} +\ell\ O(n^{1/3})$. This suggests that the $\ell$-dependence is suppressed by at least a power of $n^{1/3}$. We shall provide evidence for this using Monte Carlo simulations to estimate the constant for $\ell=1,\ldots,6$.

\subsection{Studying asymptotics using Monte Carlo simulations}

Let $\lambda$ denote a particular height function (or equivalently a stack of oranges) with volume $n$. We indicate this by $\lambda \vdash n$.  Let $n_+(\lambda)$ ($n_-(\lambda)$) denote the number of oranges that can be removed (resp. added) to obtain a valid height function with volume $(n+1)$ (resp. $(n-1)$). 
 Define $N_\pm(n)$ as follows:
\begin{align}
N_+(n):=&\frac{\sum_{\lambda\vdash n} n_+(\lambda)}{\sum_{\lambda\vdash n} 1 } =
\frac{\sum_{\lambda\vdash n} n_+(\lambda)}{a_\ell(n) } \quad \textrm{and}\quad \nonumber \\
N_-(n):=&\frac{\sum_{\lambda\vdash n} n_-(\lambda)}{\sum_{\lambda\vdash n} 1 } =
\frac{\sum_{\lambda\vdash n} n_-(\lambda)}{a_\ell(n) } \ ,
\end{align}
where the sums run over all height functions with volume $n$. For $n>1$, one has the identity
\begin{equation}
N_+(n-1)\ a_\ell(n-1) = N_-(n)\ a_\ell(n)\ .
\end{equation}
Given $N_+(n)$ and $N_-(n)$, one can determine $a_\ell(n)$ by recursively using the formula and using $a_\ell(0)=1$. That is,
\begin{equation}
a_\ell(n) = \prod_{m=0}^{n-1} \frac{N_+(m)}{N_-(m+1)}\ , \label{aldeterminea}
\end{equation}
or for $n>n_0$ (where $a_\ell(n_0)$ has been exactly enumerated)
\begin{equation}
a_\ell(n) = \prod_{m=n_0}^{n-1} \frac{N_+(m)}{N_-(m+1)}\ a_\ell(n_0)\ , \label{aldetermineb}
\end{equation}

The transition matrix Monte Carlo simulation we use estimates averages for $N_\pm(n)$ for $n\in [1,4100]$ for $\ell=1,\ldots,6$. We assume that $\log a_\ell(n)$ takes the following asymptotic form:
\begin{equation}
\log a_\ell(n) \sim c_0\  n^{2/3} + c_1\ \log n  + c_2  + c_3\ n^{1/3} \ .
\end{equation}
Using this form, one can show that
\begin{equation}
\log \frac{a_\ell(n)}{a_\ell(n-1)} =  \log \frac{N_+(n-1)}{N_-(n)} \sim \frac23\ c_0\ n^{-1/3}  + c_1 \ n^{-1} + \frac13\ c_3\ n^{-2/3} \ . 
\end{equation}
For our Monte Carlo fits, we use a variant of the above formula
\begin{equation}
\log \frac{a_\ell(n)}{a_\ell(n-1)}  \sim \left(\tfrac23+\tfrac{1}{9n}\right)\ c_0\ n^{-1/3}  + c_1 \ n^{-1} + \left(\tfrac13+\tfrac{1}{9n}\right)\ c_3\ n^{-2/3} \ ,
\label{npmpars}
\end{equation}
where we have added some sub-leading terms (suppressed by $1/n$) without changing the number of parameters.
This formula is suited to our Monte Carlo simulation as it relates the quantities computed in the simulation to the parameters that appear in
the asymptotic form for $a_\ell(n)$. The parameter $c_2$ has to be determined separately as it drops out of the above formula.

%\noindent \textbf{Details of the Monte Carlo simulation go here}

The Monte Carlo simulation is a randomisation of the Bratley-McKay algortihm.  We adapted the Transition Matrix Monte Carlo method described in  \cite{Widom:2002} to study solid partitions restricted to be in a box and to estimate the asymptotics of solid partitions in  \cite{Destainville2014}. As in those papers, we use a fictitious temperature to get a wider coverage for values of $n\in[1,N_\text{max}]$. The averages for estimating $N_\pm(n)$ are carried out at infinite temperature. We carried out several runs with different values of 
$N_\text{max}=  1200, 2200, 4200, 10200$. For each value of $N_\text{max}$, we carried out runs with distinct seeds for the random number generator in order to get an estimate of the statistical error in $N_\pm(n)$. The numbers from all runs were then combined into a single data set with statistical errors. For $n\in[1,30]$, the values of $N_\pm(n)$ were compared with exact values (again computed using the Bratley-McKay algorithm  \cite{VarshaThesis}) to see if the statistical errors that we obtained were consistent with actual ones. The exact numbers also enabled us to establish that longer runs lead to lower statistical errors. As a proof of concept, we also verified that a similar randomisation of the Bratley-McKay code for ordinary partitions worked. The runs with $N_\text{max}=10200$ were not used in any of our fits as their errors were too large and were only used to verify that our fits do reproduce the asymptotic behaviour correctly.

\begin{figure}
\centering
\includegraphics[width=3in]{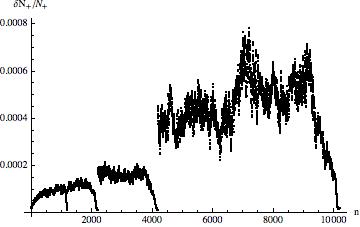}
\caption{Plot of statistical error, $\frac{\delta N_+}{N_+}$, against $n$. The merger of four data sets is also clearly visible. All statistical errors show similar behaviour.}
\end{figure}

\subsubsection{Summary of Monte Carlo results}
We carried out three sets of fits using estimates for $N_\pm(n)$  for values of $n$ in the range $[\ell^3+10, 4100]$.  The data for $n>4100$ has larger errors and hence is only used to see if the trends of the fits are consistent.
\begin{enumerate}
\item[\bf Fit 1:]
This is the formula given in Eq. \eqref{npmpars} which involves three parameters. The first fit gives
\begin{center}
\begin{tabular}{c|c|c|c}
$\ell$ & $c_0$ & $c_3$ & $c_1$  \\ \hline
1 &$2.34426$ &  $-0.0110902$ &  $-0.746477$ \\
2& $2.34437$ & $-0.0156179$ &  $-0.740064$ \\
3 & $2.34441$ & $-0.0281878$ & $-0.670066$ \\
4 &$2.34492$ & $-0.0669717$ & $-0.5053$ \\
5 & $2.34558$ & $-0.119494$ & $-0.248828$ \\
6 &$2.34538$ &  $-0.144212$ &  $0.0401944$ \\ \hline
\end{tabular}
\end{center}
\item[\bf Fit 2:] The second fit is one where a
fourth parameter is introduced by adding a term $\varepsilon\ n^{-4/3}$ to the right hand side of Eq. \eqref{npmpars}. 
The second fit gives
\begin{center}
\begin{tabular}{c|c|c|c|c}
$\ell$ & $c_0$ & $c_3$ & $c_1$  & $\varepsilon$ \\ \hline
1 &$2.34401$ &  $0.0028884$ & $-0.78056$ & $0.0630788$ \\
2& $2.34417$ &$0.00278436$ & $-0.770915$ & $0.064166$ \\
3 &$2.3379$ & $0.0104204$ & $-0.783435$ & $0.277578$ \\
4 &$2.34397$ & $-0.00361808$ &$-0.712328$& $0.589102$ \\
5 & $2.34444$ &  $-0.0387065$ & $-0.538716$ & $0.935576$ \\
6 &$2.34329$ &  $-0.0140978$ & $-0.575709$  & $2.2077$ \\ \hline
\end{tabular}
\end{center}
\item[\bf Fit 3:] 
A third form for the asymptotic behaviour, based on the (leading) singularity of the generating function, is \[a_\ell(n) \sim A\ \mu^{n^{2/3}}\ n^g\ .\] 
Comparing with the first asymptotic formula, we see that $A=e^{c_2}$, $\mu = e^{c_0}$, $g=c_1$  and  $c_3=0$.
For the third fit we also added the term $\varepsilon\ n^{-4/3}$ term, giving 
\begin{center}
\begin{tabular}{c|c|c|c|c}
$\ell$ & $c_0$ & $\mu=e^{c_0}$&  $g=c_1$  & $\varepsilon$ \\ \hline
1 & $ 2.34407$&$10.4236$ & $-0.7741296$&$ 0.0520715$ \\
2&$2.34412$&$10.4241$ & $ -0.777704$&$ 0.0773793$ \\
3&$2.34397$&$10.4225$&$-0.754525$&$ 0.210345$ \\
4&$2.34391$&$10.4219$&$ -0.723663$&$ 0.620114$\\
5& $2.34389$&$10.4217$&$ -0.673265$&$ 1.35694$ \\
6&$ 2.34348$&$10.4174$&$-0.522174$&$ 2.02021 $ \\ \hline
\end{tabular}
\end{center}
We see that forcing $c_3=0$ makes the value of $c_0$ almost independent of $\ell$ providing evidence to our conjecture that $c_0$ is $\ell$-independent. We assign it the $\ell$-independent  value 
\begin{equation}
\boxed{
c_0=2.344\pm 0.001 \text{ or } \mu=10.42\pm0.01\ .
}
\end{equation} 
The errors here are crude estimates based on comparing how the numbers change when compared to the second fit. Further the parameter $g=c_1$ is clearly $\ell$-dependent. 
%A cubic fit gives 
%\[
%g_\ell=-0.812163 +0.0545004 \ell -0.0236458 \ell^2 -0.00375233 \ell^3\ .
%\]
\end{enumerate}
\begin{figure}
\centering
\includegraphics[height=2in]{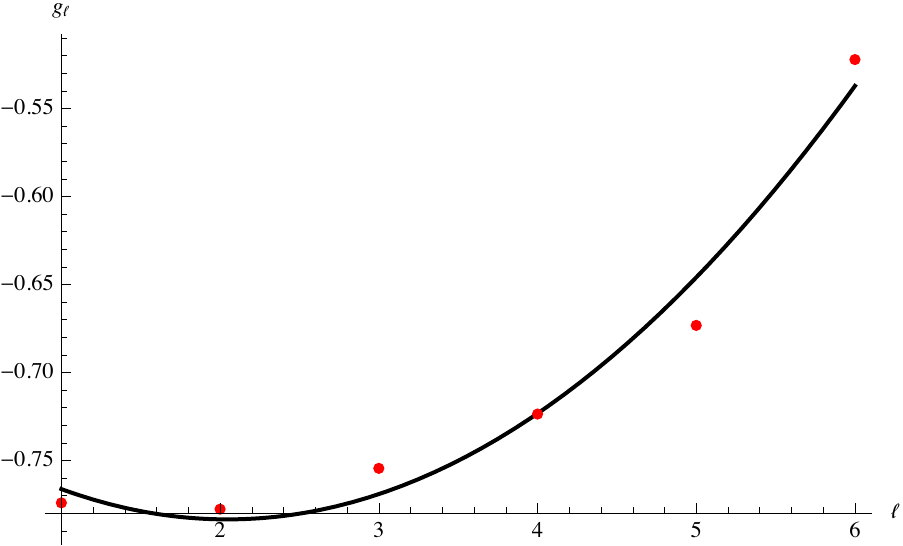}
\caption{$g_\ell$ vs $\ell$ along with a quadratic fit which gives $g_\ell=-0.717604 - 0.0644484 \ell + 0.0157537 \ell^2$.}
\end{figure}

The main conclusion that we can draw from the Monte Carlo simulations is that the asymptotic behaviour of $a_\ell(n)$ is consistent
with the following form:
\begin{equation}
a_\ell(n) \sim A_\ell \ \mu^{n^{2/3}} \ n^{g_\ell} = e^{c^{\ell}_2}  \ \mu^{n^{2/3}} \ n^{g_\ell}\ ,
\label{formulafinal}
\end{equation}
where $\mu=10.42\pm0.01$ is an $\ell$-independent constant and $A_\ell=e^{c^{\ell}_2} $ and $g_\ell$ are $\ell$-dependent constants. 

We still need to estimate $A_\ell$ or equivalently the constant $\alpha_3^\ell$ as it does not appear in the fits based on Eq. \eqref{npmpars}.  We need explicit values for $a_{\ell}(n)$ -- this is something  we  indirectly determine using our estimates for $N_\pm(n)$  combined with Eq. \eqref{aldetermineb} with $n_0$ chosen to be the largest  possible value appearing
in our explicit enumeration given in Table \ref{exacttable}. We fit to the formula
\begin{equation}
n^{-2/3}\log a_\ell(n) \sim c_0  +c_1\ n^{-2/3}\log n  + c_2\ n^{-2/3} -3\varepsilon\ n^{-1}  \ ,
\end{equation}
with the values of $c_0$, $g_\ell$ and $\varepsilon$  determined by Fit 3.  We use small values of $n\in [\text{max}(10,\ell^3), \ell^3+100]$ as it is here that this term contributes significantly and statistical errors are small.
\begin{center}
\begin{tabular}{c|c|c}
$\ell $ & $c_2$ & $A=e^{c_2}$\\[3pt] \hline
 1 & $-1.55101$ & $0.212034$ \\
 2 & $-1.2617$ & $0.283173$\\
 3 & $-0.64815$ & $0.523012$ \\
 4 & $0.356079$ & $1.42772$\\
 5 & $1.64144$& $5.16257$ \\
 6 & $2.52126$ & $12.4442$ \\ \hline
\end{tabular} 
\end{center}
%\begin{figure}[h]
%\centering
%\includegraphics[height=2in]{}
%\caption{A quadratic fit to the above data  gives $c_2(\ell) = -1.89078 + 0.154088 \ell + 0.100736 \ell^2$.}
%\end{figure}

%\clearpage
%\begin{figure}[htb]
%\begin{center}
%\includegraphics[width=2.4in]{} \quad \includegraphics[width=2.4in]{} \\
%\includegraphics[width=2.4in]{} \quad \includegraphics[width=2.4in]{} \\
%\includegraphics[width=2.4in]{} \quad \includegraphics[width=2.4in]{}
%\end{center}
%\caption{Plots of $\log \frac{a_\ell(n)}{a_\ell(n-1)}$ vs $n^{-1/3}$ for $\ell=1,\ldots,6$. The three fits are plotted  with fit1 in black, fit2 in green, fit3 in blue and red for the Monte Carlo data.. The vertical black line is at $1/\ell$ -- it is only
%for values of $n^{-1/3}<1/\ell$ that we expect the fitted curves will match the data. The black curve (the first fit above) seems to work better when $n<\ell^3$.} 
%\end{figure}
%\clearpage
\begin{figure}[thb]
\begin{center}
\includegraphics[width=2.4in]{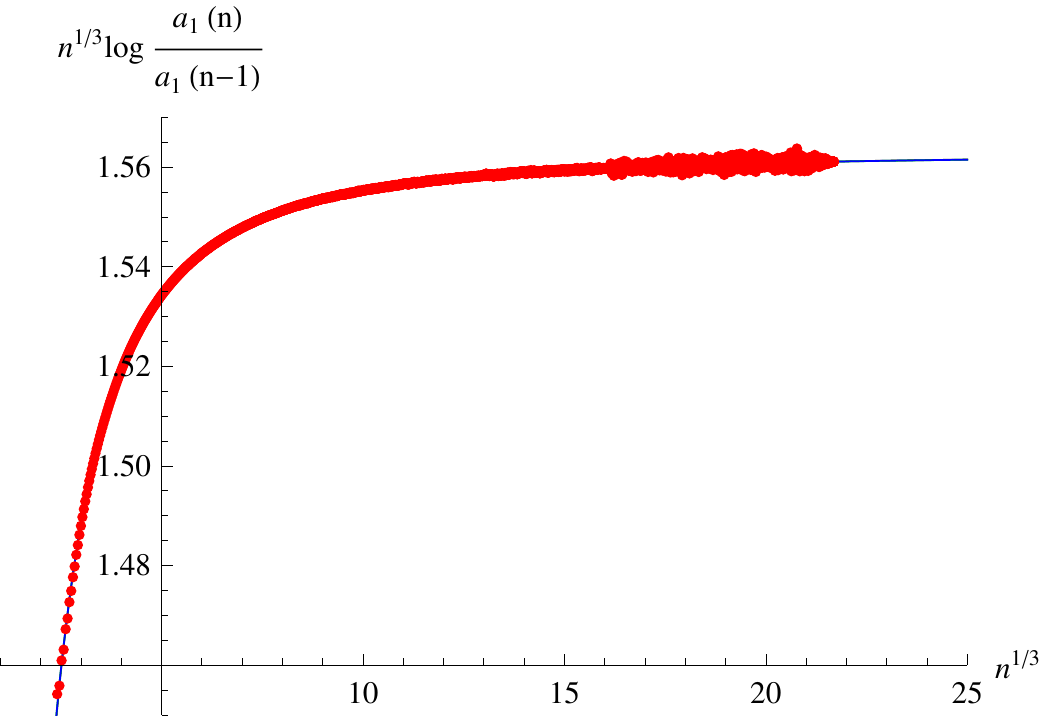} \qquad \includegraphics[width=2.4in]{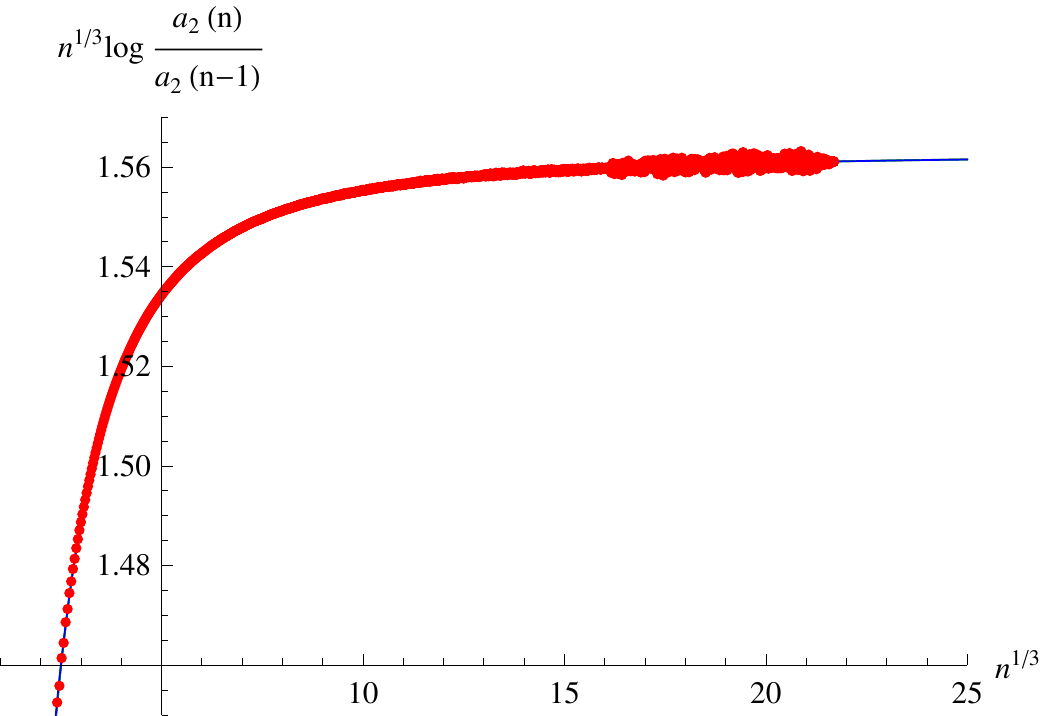} \\[5pt]
\includegraphics[width=2.4in]{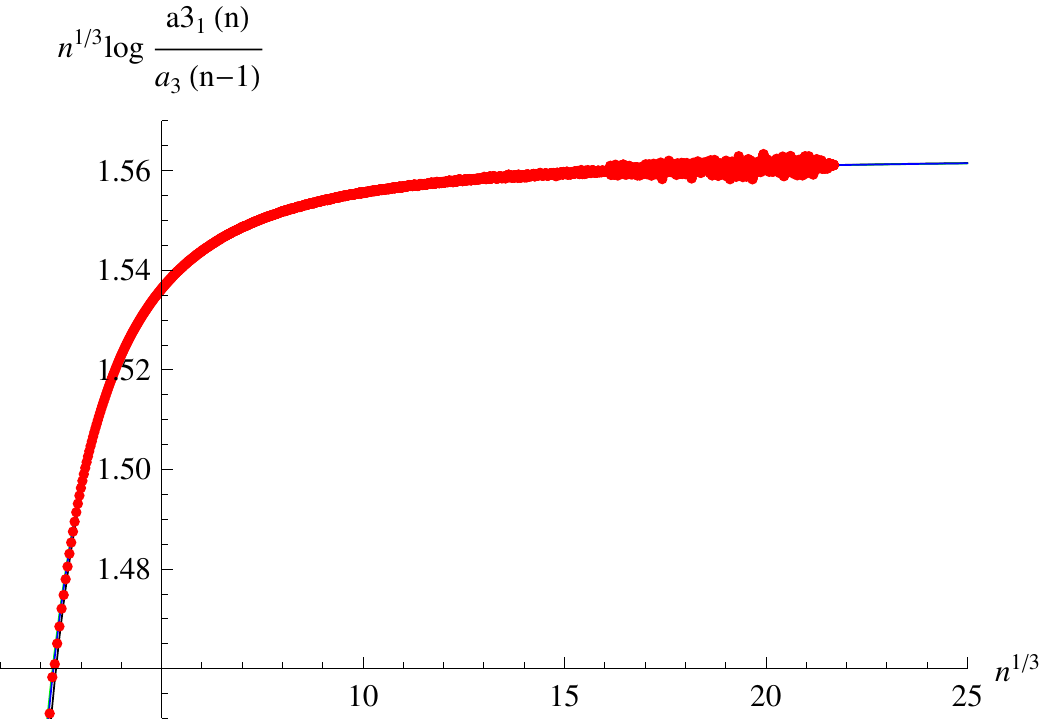} \qquad \includegraphics[width=2.4in]{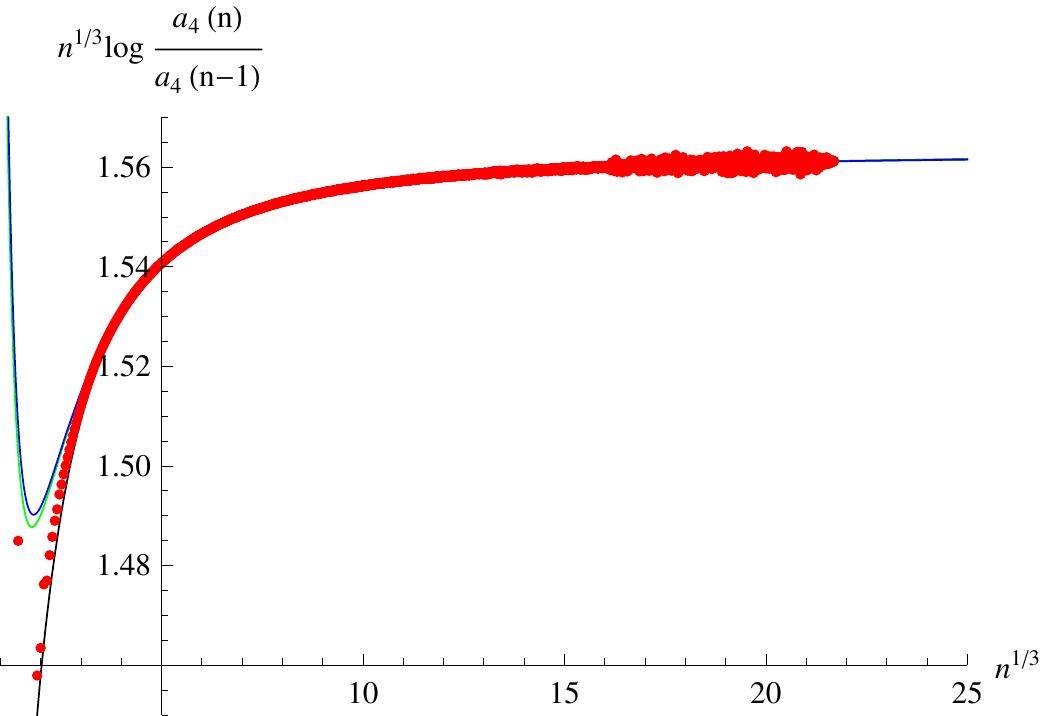} \\[5pt]
\includegraphics[width=2.4in]{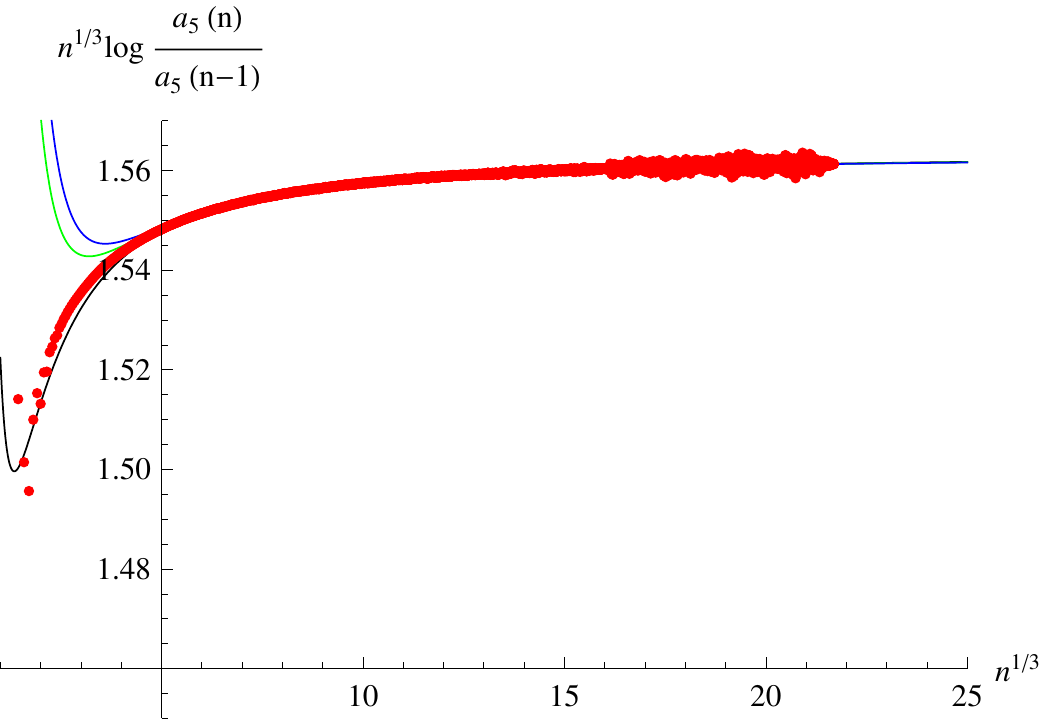} \qquad \includegraphics[width=2.4in]{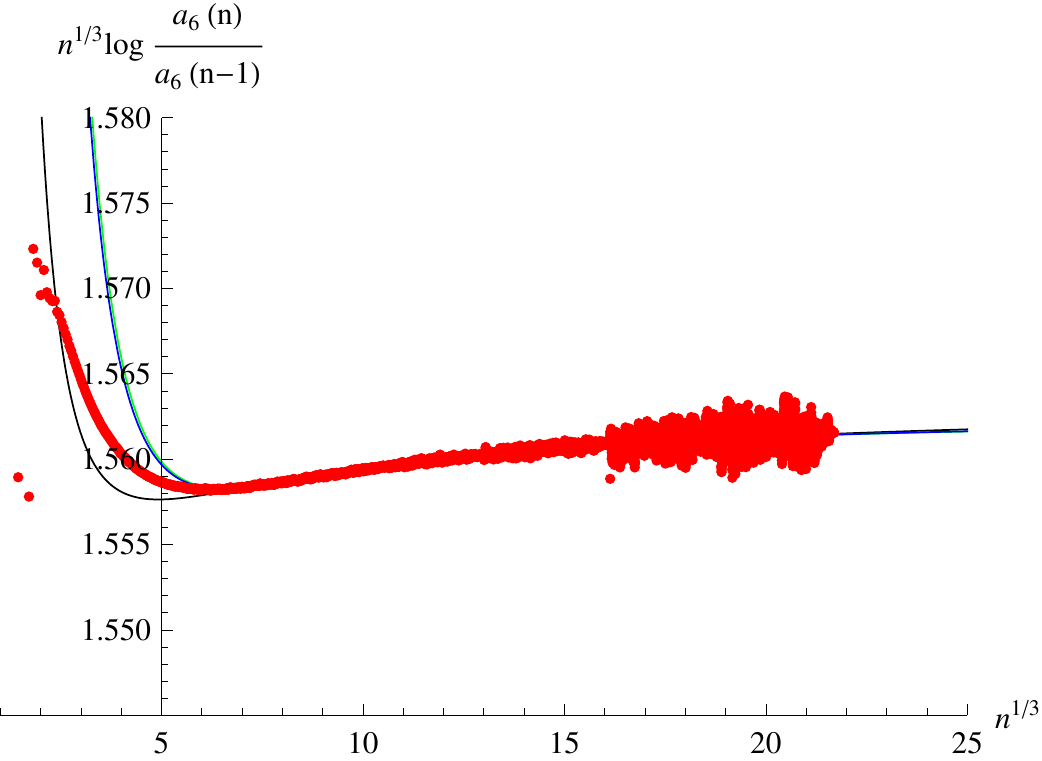}
\end{center}
\caption{Plots of $\log \frac{a_\ell(n)}{a_\ell(n-1)}$ vs $n^{1/3}$ for $\ell=1,\ldots,6$. The three fits are plotted  with fit1 in black, fit2 in green, fit3 in blue and red for the Monte Carlo data. The fits should work only for $n^{1/3}>\ell$ and the fits differ when $n^{1/3}<\ell$.} 
\end{figure}

\section{Series analysis of partition-type series}
%\subsection{}
Much of the pre-existing work on methods to extract the asymptotic form of coefficients numerically from a finite number of coefficients assumes the form
$$a_n \sim \text{const.} \mu^n n^g,$$ with corresponding generating function $$\sum a_n x^n \sim \text{const.}\ \left( 1 - \mu x \right )^{-1+g}.$$ Many problems in enumerative combinatorics and statistical mechanics have such singularities. Methods for the analysis of coefficients in order to estimate the growth constant $\mu, $ the exponent $g$ and the amplitude, given by the constant pre-multiplier have been well-developed over the past few decades, and are discussed in  \cite{G89}.

In contrast, for the type of asymptotics associated with plane partitions and related series, the literature is very scant indeed. Accordingly, we first take a known problem, the asymptotics of plane partitions, and develop appropriate methods of series analysis. We then apply these methods to the problem at hand, the square-ice analogue of plane partitions. 

\subsection{Analysis of plane-partition series}
The generating function of plane partitions, due to MacMahon \cite{Macmahon1896memoir}, is well-known and is given by $$P(x)=\sum p_n x^n =\prod_{k \ge 0} \frac{1}{(1-x^k)^k}=1+x+3x^2+6x^3+13x^4 + \cdots.$$ The asymptotics are also well-studied, and are given  by \cite{Wright1931,Mutafchiev2006}
\begin{equation}
\label{asy1}
n^{-2/3}\log{p_n} \sim c_0+c_1 \frac{\log{n}}{n^{2/3}}+\frac{{c_2}}{n^{2/3}} +O(n^{-4/3}),
\end{equation}
 where $c_0 = 2.00945\cdots,$ $c_1=-\frac{25}{36} = -0.694444\cdots,$ and $c_2 = -1.4631 \cdots.$ 

It is straightforward to generate as many terms as required from the generating function. We have chosen to generate 200 terms, and investigate the assumed form 
$${\tilde p_n} =n^{-2/3}\log{p_n} \sim c_0+c_1 \frac{\log{n}}{n^{\alpha}}+\frac{{c_2}}{n^{\alpha}},$$ with higher order terms neglected. That is to say, we assume ignorance of the exponent $\alpha,$ and set out to estimate its value. 

Forming first-differences, so that
$$s_n={\tilde p_n}-{\tilde p_{n-1}} \sim -c_1\alpha \frac{\log{n}}{n^{1+\alpha}} + O\left (\frac{{1}}{n^{1+\alpha}} \right ),$$ then a plot of $s_n$ against $\frac{\log{n}}{ n^{(1+\alpha)}}$ should be linear for the ``correct" choice of $\alpha$ and $n$ sufficiently large. This is not a particularly sensitive test, but one might expect to establish if $\alpha$ is closer to 1 or to zero. In Figure \ref{fig:1} we show such a plot for three values of $\alpha.$ For $\alpha=1,$ shown at left, the plot is slightly convex, while the right-most plot, corresponding to $\alpha=0.5$ is significantly concave, while the central plot, corresponding to $\alpha=0.75$ is essentially linear. The correct value of $\alpha$ is of course $2/3$ in this case.

\begin{figure}[htbp] %  figure placement: here, top, bottom, or page
 \centering
    \includegraphics[width=2.5in]{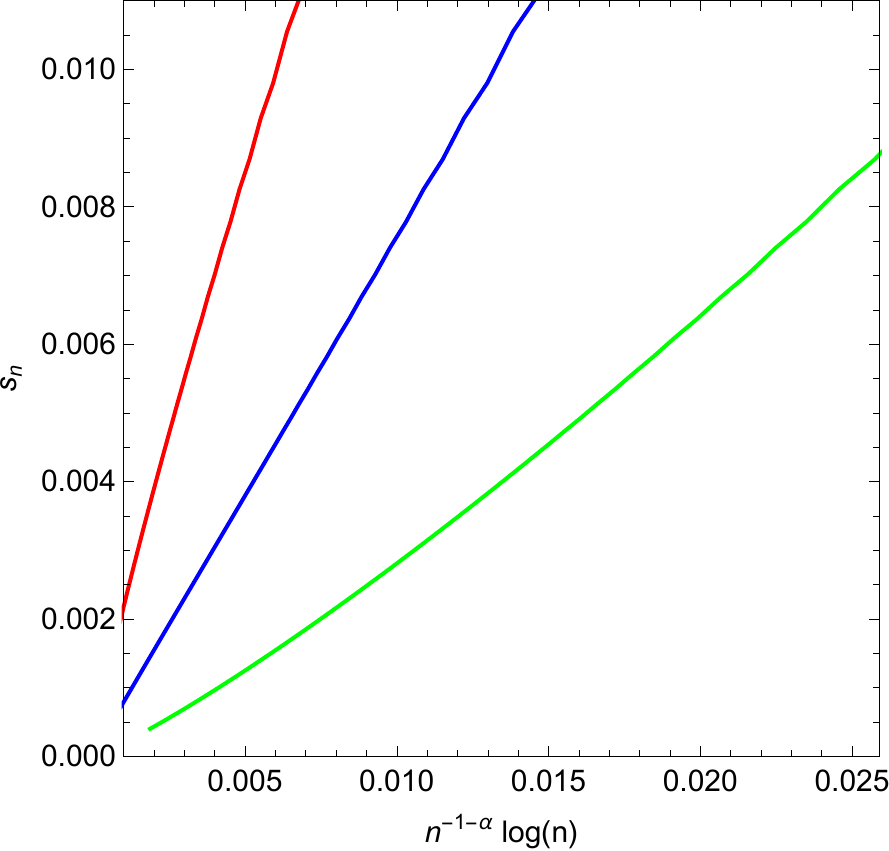} 
   \caption{Left-most plot, $\alpha=1,$ central plot, $\alpha=0.75$ and right-most plot, $\alpha=0.5.$ for plane partitions.}
   \label{fig:1}
\end{figure}

An alternative way to estimate $\alpha$ is to plot $\log \left ( \frac{s_n}{\log {n}} \right )$ against $\log{n}.$ This should have gradient $-(1+\alpha).$ This plot (not shown) is indeed visually linear. If one calculates the local gradient, defined as the gradient of successive pairs of points, one sees a steady variation with $n.$ This local gradient is plotted against $n^{-2/3}$ in the left-most plot in figure \ref{fig:g1}. It is clear that this is extrapolating to a value around $-1.68$ as $n \to \infty,$ which is quite close to the known exact value $-5/3.$ 

Assuming we have found the value of $\alpha$ correctly to be $2/3,$ we are now in a position to estimate the constants appearing in the asymptotic expression (\ref{asy1}). We fit successive triples of terms $\{\tilde{p}_{n-2},\,\,\tilde{ p}_n,\,\,\tilde{ p}_{n+2}\}$ in order to estimate the constants $\{c_0(n),\,\, c_1(n),\,\, c_2(n) \}.$ (Alternate terms are used to reduce an odd-even effect that would otherwise cause oscillatory estimates). We show the estimates of these constants, plotted against $n^{-4/3},$  $n^{-2/3}$ and $n^{-1/3}$ respectively in figures  \ref{fig:g1} and \ref{fig:c1-pp} below. The estimates of $c_0$ are clearly going to a value around $2.0095,$ which is very close to the exact value. The estimates of $c_1$ appear to be going to a limit around $-0.695$, in good agreement with the known exact value, $-0.69444\cdots$. The estimate for $c_2 \approx -1.436$  which is comparable to the known value of $c_2=-1.4631\cdots,$. 

\begin{figure}[h] %  figure placement: here, top, bottom, or page
 \centering
    \includegraphics[width=2.5in]{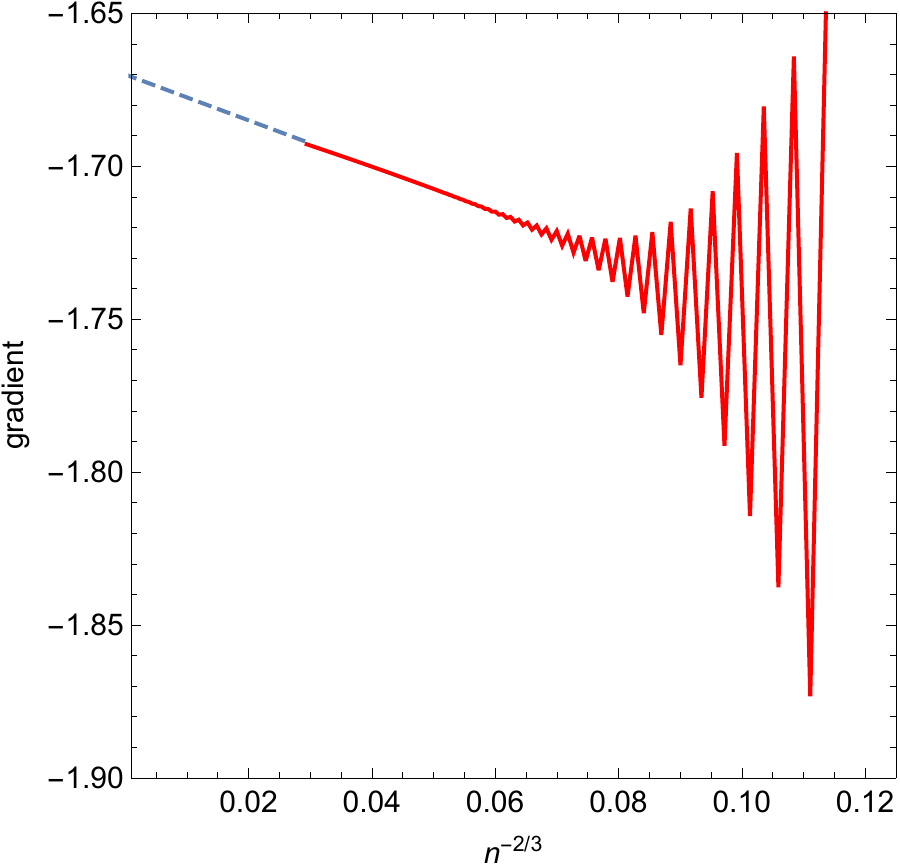}  \qquad \includegraphics[width=2.5in]{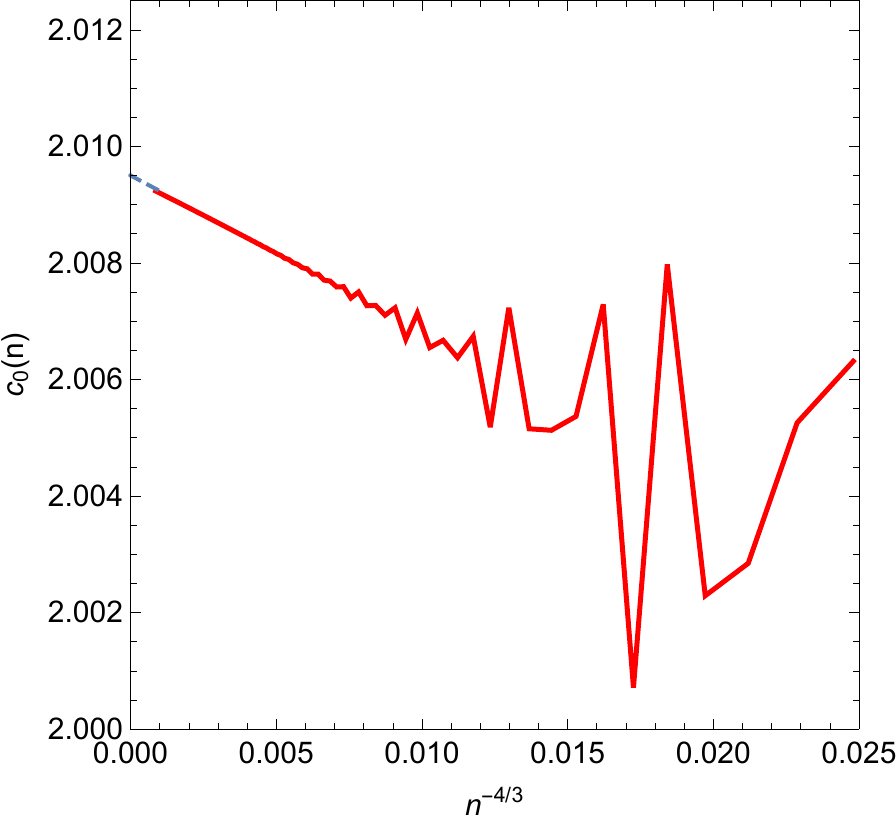} 
   \caption{(Left) Plot of local gradient against $n^{-2/3}$ and (right) Plot of $c_0(n)$ against $n^{-4/3}$ for plane partitions.}
   \label{fig:g1}
\end{figure}

%\begin{figure}[h] %  figure placement: here, top, bottom, or page
% \centering
%   \includegraphics[width=3in]{c0-pp.jpg} 
%   \caption{Plot of $c_0(n)$ against $1/n$ for plane partitions.}
%   \label{fig:c0-pp}
%\end{figure}

\begin{figure}[h] %  figure placement: here, top, bottom, or page
% \centering
 %  \includegraphics[width=2.5in]{c1-pp.jpg} \qquad
    \includegraphics[width=2.5in]{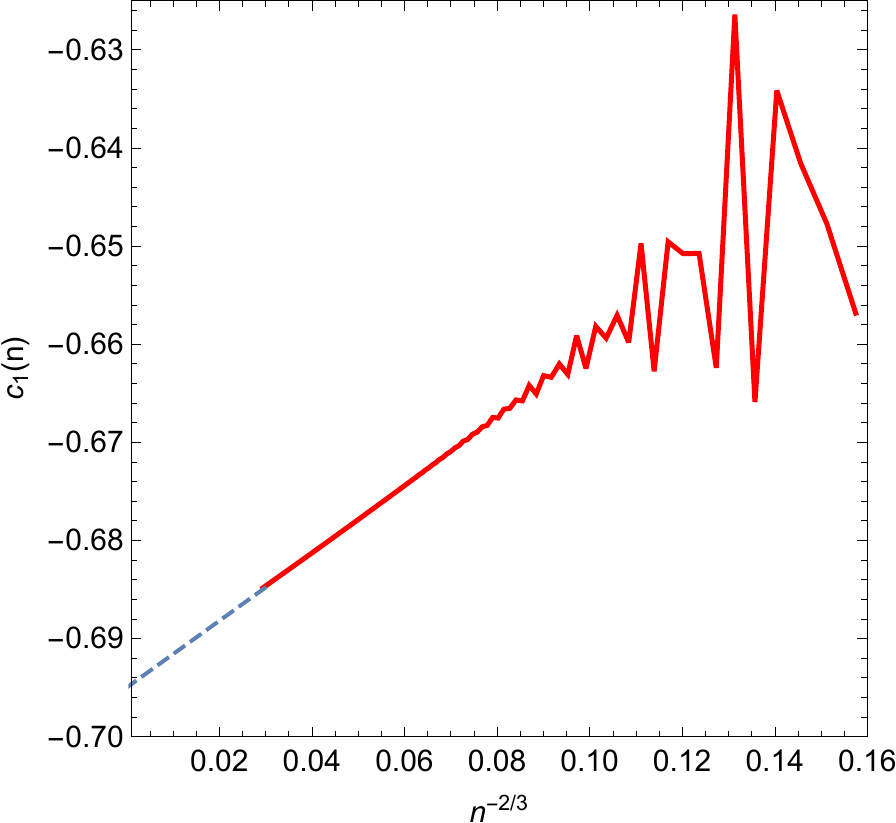} \qquad
%   \caption{Plot of $c_1(n)$ against $1/n$ for plane partitions.}
 %\end{figure}
%
%\begin{figure}[h] %  figure placement: here, top, bottom, or page
 %\centering
 %  \includegraphics[width=2.5in]{c2-pp.jpg} 
   \includegraphics[width=2.5in]{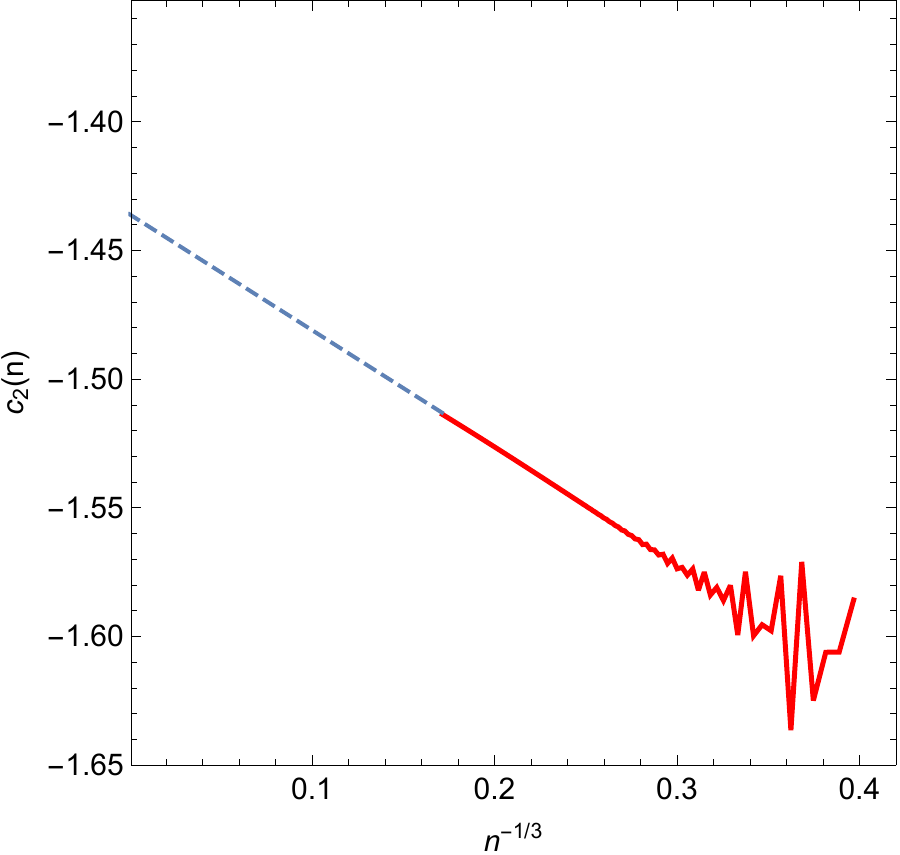} 
   \caption{Plots of $c_1(n)$ and $c_2(n)$ against $n^{-2/3}$ (resp. $n^{-1/3}$) for plane partitions.}
  \label{fig:c1-pp}
 %  \label{fig:c2-pp}
\end{figure}

We have repeated the above analysis with an additional term $c_3/n^{1/3}$ in (\ref{asy1}), and the estimators of $c_3$ are clearly going to a value close to 0, consistent with the absence of such a term.
\subsection{Analysis of square-ice series}
We now repeat the above analysis for the sequence $a_1(n)$ which is known exactly for $n \le 60.$ We have recently developed a numerical technique that allows one to approximately extend a given series by several coefficients, with a level of precision that is good enough for this type of graphical analysis, see  \cite{G16}. In this way we have extended the series by 10 further terms, and these are quoted in Table \ref{seriescompare} alongside the estimates from the Monte Carlo simulations.

As in the preceding case, we first form the sequence $${\tilde p_n} =n^{-2/3}\log{a_1(n)} \sim c_0+c_1 \frac{\log{n}}{n^{\alpha}}+\frac{{c_2}}{n^{\alpha}},$$ with higher order terms neglected, and we calculate the first-differences, $s_n={\tilde p}_n-{\tilde p}_{n-1}$ and plot $s_n$ against $\frac{\log{n}}{ n^{(1+\alpha)}}.$  We show the results in Figure \ref{fig:a1}, again for three values of $\alpha.$ The situation is exactly the same as for plane partitions. For $\alpha=1,$ shown at left, the plot is slightly convex, while the right-most plot, corresponding to $\alpha=0.5$ is significantly concave, while the central plot, corresponding to $\alpha=0.75$ is essentially linear. This suggests that the correct value of $\alpha$ is also $2/3$ in this case.

\begin{figure}[htbp] %  figure placement: here, top, bottom, or page
 \centering
    \includegraphics[width=2.5in]{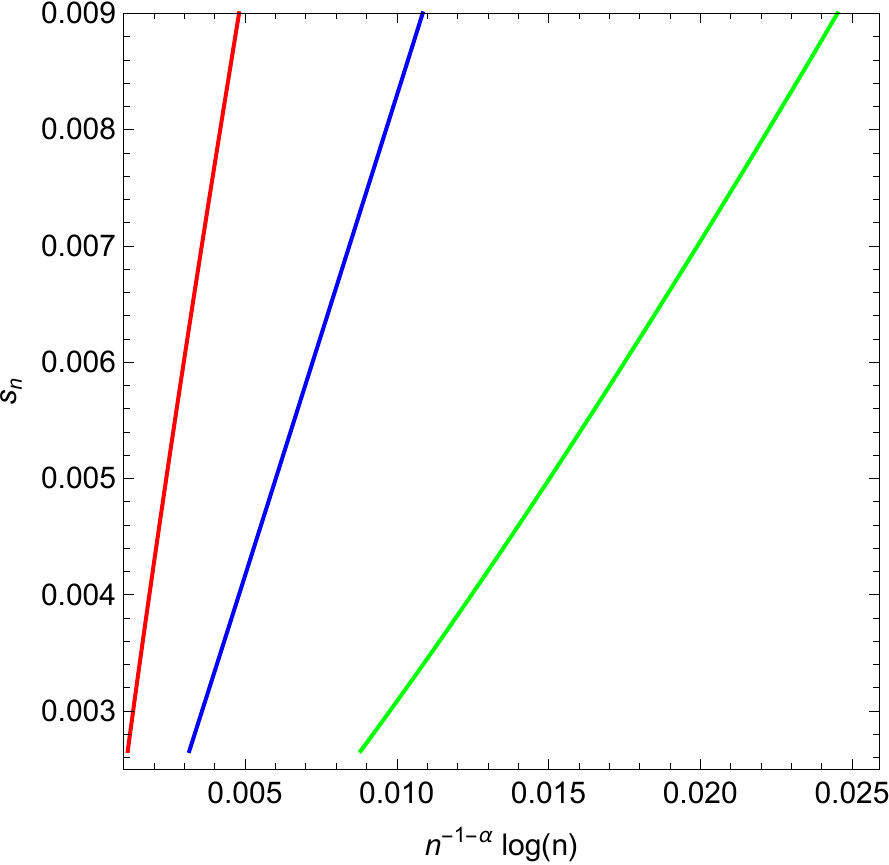} 
   \caption{Left-most plot, $\alpha=1,$ central plot, $\alpha=0.75$ and right-most plot, $\alpha=0.5.$ for the square-ice series.}
   \label{fig:a1}
\end{figure}

Estimating $\alpha$ by plotting $\log \left ( \frac{s_n}{\log {n}} \right )$ against $\log{n}$ again gives a visually linear plot. More interesting is the plot of the local gradient, and this is shown plotted against $1/n^{2/3}$ in Figure \ref{fig:a1}. This appears to extrapolate to a value around $-1.68$ as $n \to \infty,$ just as for plane partitions, which again suggests that the correct exact value should be $-5/3.$ 

Assuming we have found the value of $\alpha$ correctly to be $2/3,$ we are now in a position to attempt to estimate the constants appearing in the asymptotic expression (\ref{asy1}). As for the case with plane partitions, we fit successive triples of terms $\{\tilde{p}_{n+2},\,\,\tilde{ p}_n,\,\,\tilde{ p}_{n+2}\}$ in order to estimate the constants $\{c_0(n),\,\, c_1(n),\,\, c_2(n) \}.$  We show the estimates of these constants, plotted against suitable powers of $n$,  $\{n^{-4/3},n^{-2/3},n^{-1/3}\}$,  in Figures \ref{fig:c0-si} and \ref{fig:c1-si} below. All display oscillatory behaviour which makes extrapolation difficult, if not impossible. If we assume -- and this is indeed a leap of faith, justifiable only because the results are consistent with the Monte Carlo analysis -- that this oscillatory trend persists with decreasing amplitude, then we can estimate $c_0 \approx 2.345,$  $c_1 \approx -0.75$ and $c_2 \approx -1.7.$ 

These results are entirely consistent with, though less accurate than, the Monte Carlo estimates obtained from the third fit, which assumes $c_3$ is zero (that is, there is no term $O(n^{-1/3})$ in Eq. (\ref{asy1})).
\begin{figure}[h] %  figure placement: here, top, bottom, or page
% \centering
   %\includegraphics[width=2.5in]{} \qquad   \includegraphics[width=2.5in]{c0-si.jpg} 
   \includegraphics[width=2.5in]{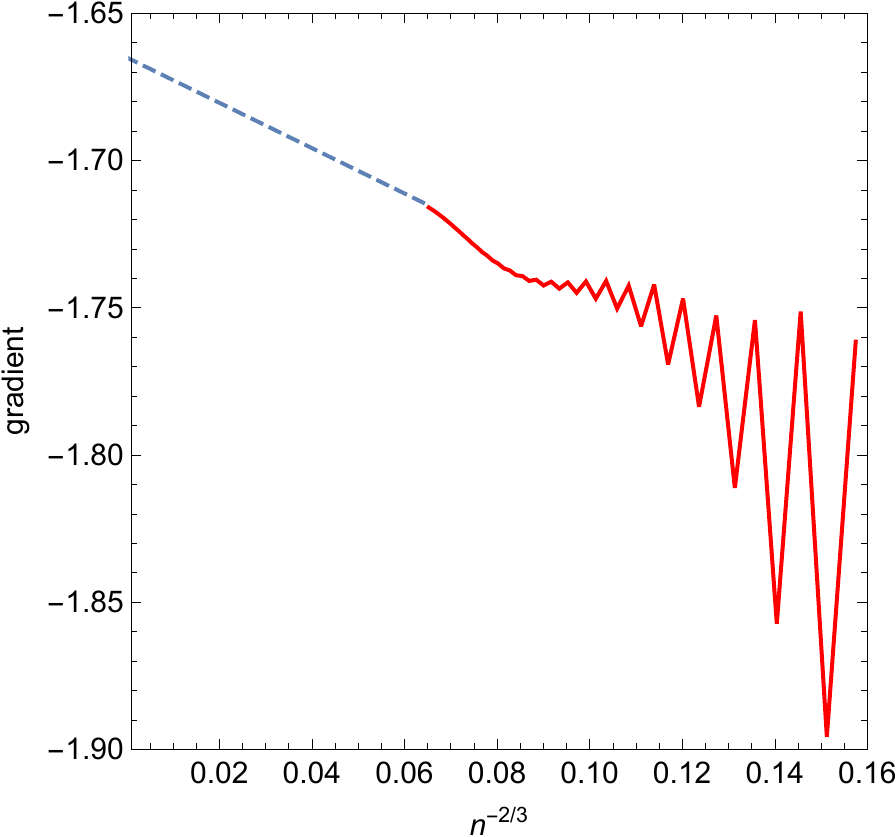} \qquad   \includegraphics[width=2.5in]{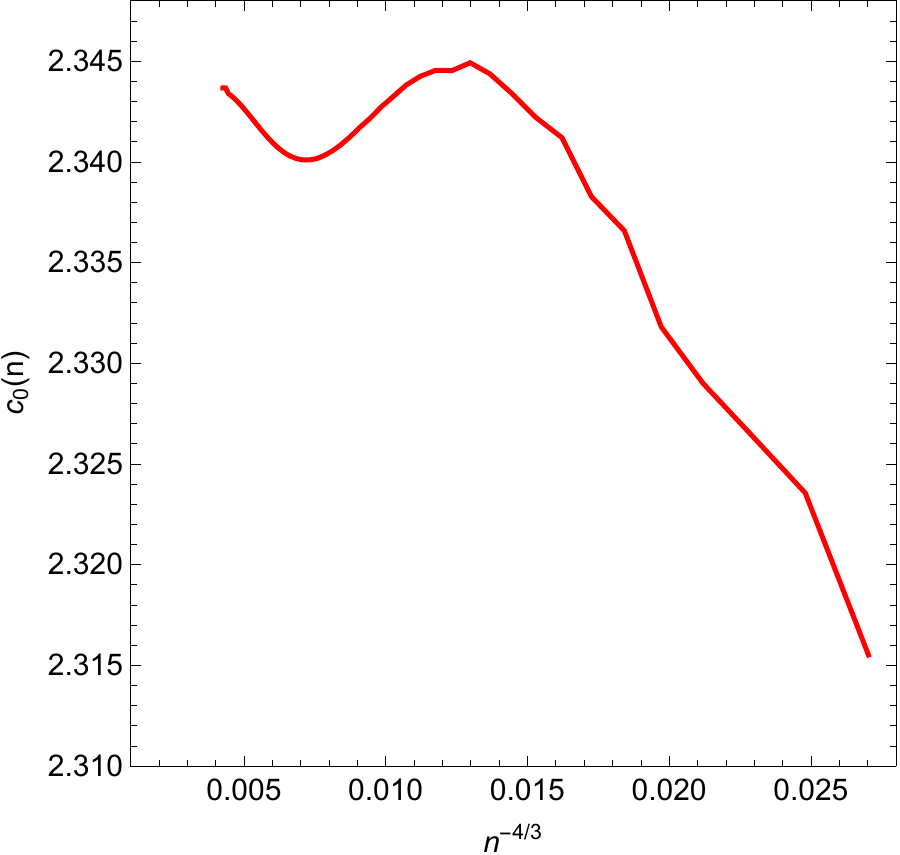} 
   \caption{(Left)Plot of local gradient against $n^{-2/3}$ and (right) plot of $c_0(n)$ against $n^{-4/3}$ for the square-ice series.}
   \label{fig:c0-si}
\end{figure}
%
%\begin{figure}[h] %  figure placement: here, top, bottom, or page
% \centering
%   \includegraphics[width=2.5in]{c0-si.jpg} 
  % \caption{Plot of $c_0(n)$ against $1/n$ for the square ice series.}
  % \label{fig:c0-si} 
%\end{figure}

\begin{figure}[h] %  figure placement: here, top, bottom, or page
% \centering
   %\includegraphics[width=2.5in]{c1-si.jpg} \qquad
    \includegraphics[width=2.5in]{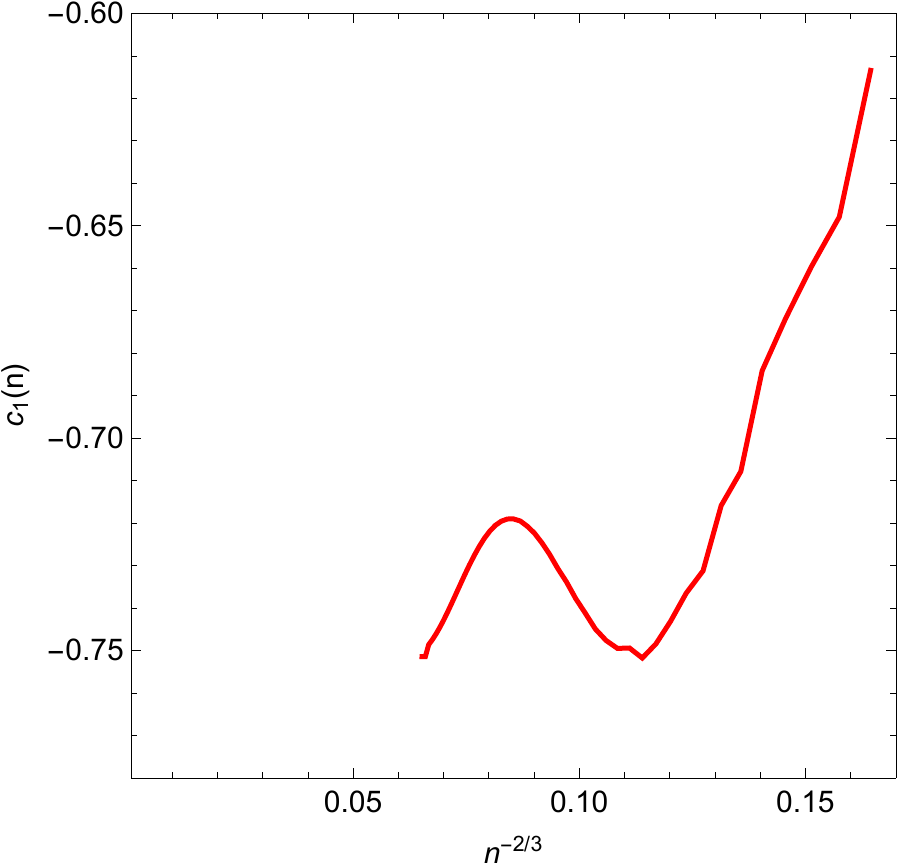} \qquad
%   \caption{Plot of $c_1(n)$ against $1/n$ for the square ice series.}
 %\end{figure}
%
%\begin{figure}[h] %  figure placement: here, top, bottom, or page
% \centering
   %\includegraphics[width=2.5in]{c2-si.jpg} 
     \includegraphics[width=2.5in]{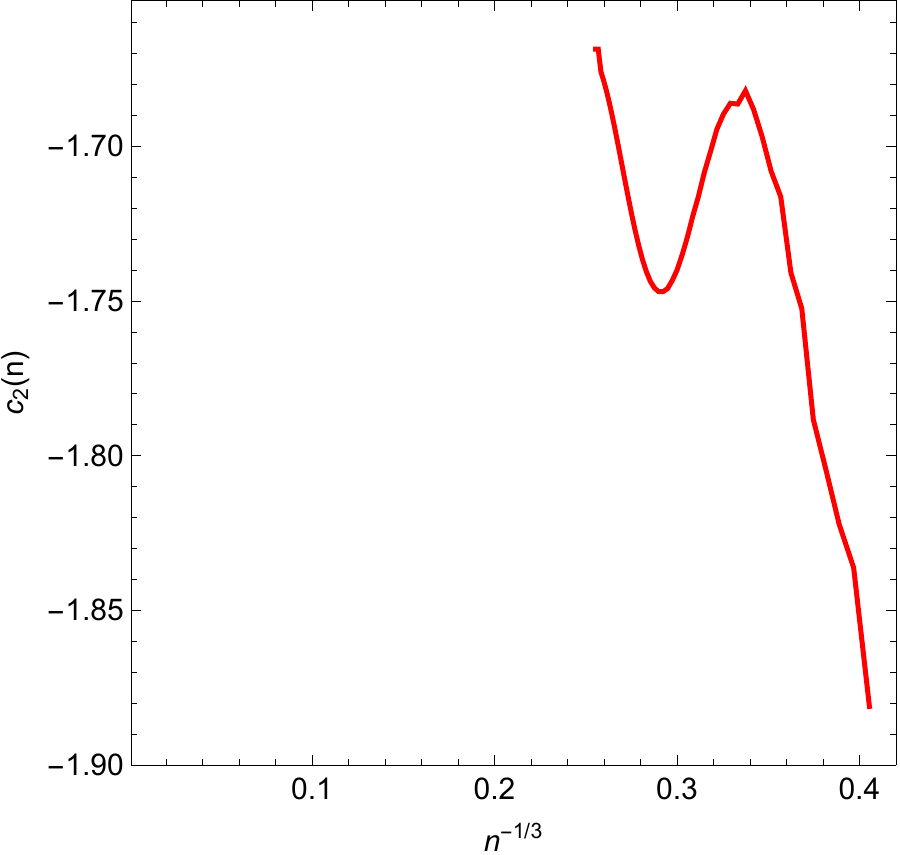} 
   \caption{Plot of $c_1(n)$  and $c_2(n)$ against $n^{-2/3}$ (resp. $n^{-1/3}$) for the square ice series.}
     \label{fig:c1-si}
  % \label{fig:c2-si}
\end{figure}
\subsection{Behaviour of $a_\ell(n)$ for $\ell > 1.$}
From our Monte Carlo work we concluded in Eq. \eqref{formulafinal} that  for $\ell\gg n^{1/3}$ $$a_\ell(n) \sim A_\ell\ \mu^{n^{2/3}}n^{g_\ell}$$ where $A_\ell$ and $g_\ell$ are $\ell$-dependent, while $\mu$ is not. For this investigation we can make the weaker assumption that the exponent $2/3$ can be positive exponent $\theta,$ as we will eliminate this dominant term. While our series analysis is not accurate enough to give a good estimate of $g_\ell$ directly (as shown above), we instead focus on $g_\ell-g_1.$ One has
\begin{equation} \label{gl}
\hat{a}_\ell(n) \equiv \frac{a_\ell(n)}{a_1(n)} \sim \frac{A_\ell}{A_1}n^{g_\ell-g_1},
\end{equation}
and so the exponent $\hat{g_\ell}=g_\ell-g_1$ can be estimated from the ratios of successive terms $\hat{a}_\ell(n).$ That is to say,
\begin{equation} \label{ratios}
r_\ell(n) \equiv \frac{\hat{a}_\ell(n)}{\hat{a}_{\ell}(n-1)} \sim 1+\frac{\hat{g_\ell}}{n}.
\end{equation}
 So a plot of $r_\ell(n)$ against $1/n$ should be linear, with slope $\hat{g_\ell},$ and with ordinate 1 as $n \to \infty.$
We show in Figure \ref{fig:ratios} the  ratios $r_\ell(n)$ plotted against $1/n$ for $\ell = 6, \, 5, \, 4, \, 3,\, 2$ reading from top to bottom. It can be seen that these ratio plots are behaving as expected, but with a small amount of curvature due to the effect of unknown higher-order terms in (\ref{ratios}). We attempt to accommodate these by calculating the local gradient $$\hat{g}_\ell(n) = n(r_\ell(n)-1) \sim \hat{g_\ell} + o(1).$$ In fact, it appears empirically that the term $o(1)$ can be replaced by $O(1/n),$ as plots of $\hat{g}_\ell(n)$ against $1/n$ appear to be essentially linear. In this way we estimate $$g_\ell \approx 0.0, \,\, 0.058, \,\, 0.17, \,\, 0.37, \,\, 0.64 $$ for $\ell = 2, \,\, 3, \,\, 4, \,\, 5, \,\, 6$ respectively. These differences lie somewhere between those obtained from fit 2 and fit 3 in our Monte Carlo analysis. Note that for $\ell>3$, we do not have exact numbers for $n>\ell^3$, so the above analysis can be taken seriously only for $\ell\leq 3$.  As the series analysis is independent of any assumptions except the form  \eqref{formulafinal}, we might expect series analysis to be more accurate for this parameter.

 \rowcolors{2}{gray!13}{}
\begin{table}
\[
\begin{array}{rrrr}
n & \text{sf}(n)\hspace*{1.8cm} & \text{mc}_1(n) \hspace*{1cm}& \text{\% error mc}_1(n)\\ \hline
 61 & 5.08349035674\times 10^{13} & 50834979702073 & 0.00641281 \\
 62 & 7.460434311\times 10^{13} & 74604412596394 & 0.0128265 \\
 63 & 1.092771318\times 10^{14} & 109276600121877 & 0.0188305 \\
 64 & 1.597623083\times 10^{14} & 159761033617959 & 0.0245731 \\
 65 & 2.3313927896\times 10^{14} & 233136871953374 & 0.030269 \\
 66 & 3.39600366875\times 10^{14} & 339595034177620 & 0.0362506 \\
 67 & 4.9379657155\times 10^{14} & 493785801126495 & 0.0421327 \\
 68 & 7.1674931\times 10^{14} & 716736440905024 & 0.0478466 \\
 69 & 1.0385930349\times 10^{15} & 1038570180194263 & 0.053417 \\
 70 & 1.5023341234\times 10^{15} & 1502380905370668 & 0.0590596 \\ \hline
\end{array}
\]
\caption{Comparing the series estimates, sf$_1(n)$, with the Monte Carlo estimates, mc$_1(n)$, for $a_1(n)$. The differences of the two estimates are consistently lower than the error in column 3 by an order of magnitude.} \label{seriescompare}
\end{table}

\clearpage

\begin{figure}[ht] %  figure placement: here, top, bottom, or page
 \centering
   \includegraphics[width=2.5in]{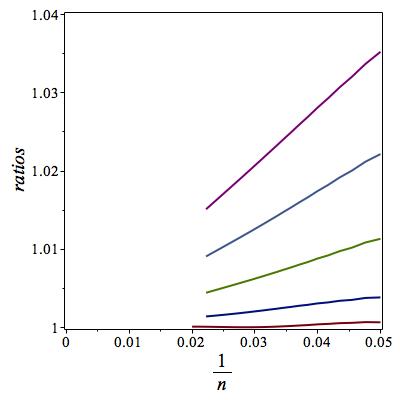} 
   \caption{Plot of ratios $r_\ell(n)$ against $1/n$ for $\ell = 6, \, 5, \, 4, \, 3,\, 2$ reading from top to bottom.}
   \label{fig:ratios}
\end{figure}

\section{Concluding Remarks}

In this paper, we have addressed several aspects of the square-ice analogue of plane partitions. Our exact enumerations have non-trivially extended the numbers provided by Young.  As expected, the asymptotic behaviour is similar to that of plane partitions. We showed this by establishing the leading asymptotic behaviour and then using Monte Carlo simulations to provide evidence for behaviour of the form given in Eq. \eqref{formulafinal}. The lack of a generating function makes it an ideal testing ground for the series extension methods that one of us (AG) has developed. In this context, our Monte Carlo simulations provide an independent check on the method. 

The exact data as well as our Monte Carlo simulations clearly indicate that for $n>1$ and $\ell>\ell'$, one has $a_\ell(n)>a_{\ell'}(n)$. However, we have not proved this statement and will leave it as an open conjecture. Conjecture  \ref{conj1} also remains open and suggests the existence of a new statistic that might enable one to prove the conjecture. Of course, it remains to be seen if one can find explicit formulae for the generating functions for $a_\ell(n)$. 

%\clearpage
\noindent \textbf{Acknowledgments:} We thank Nicolas Destainville for useful conversations as well as sharing his Monte Carlo code for solid partitions. We are grateful to Jim Propp, Rick Kenyon, Ben Young and others members  of the domino forum for drawing our attention to this problem.
\clearpage
\appendix
%\mbox{}\vspace*{-46pt}
  \rowcolors{2}{gray!13}{}

%\newpage
\begin{landscape}
\section{Numbers from exact enumeration}

\mbox{}
\vfill
  \rowcolors{2}{gray!13}{}
%  \rotatebox[]{90}{
  %\\[8pt]
  %}
%    \rotatebox[]{90}{
\hspace*{-12pt}{\scriptsize 
%\fontsize{5pt}{-5pt}
\begin{tabular}{c|rrrrrrrrrrrrrrrr}
$n$&0& 1 & 2 & 3 & 4 & 5 & 6& 7& 8& 9& 10 & 11 & 12 & 13 &14 &15 \\[2pt]  \hline %\\[-4pt]
$a_1(n)$\phantom{\Big|}& 1 & 1 & 4 & {\color{red} 10} &{\color{red} 24} & \color{red} 51 & \color{red} 109 & \color{red} 222 & \color{red} 452 & \color{red} 890 & \color{red} 1732 & \color{red} 3298 & \color{red}  6204 &
  \color{red}  11470 & \color{red} 20970 & \color{red} 37842 \\
 $a_2(n)$\phantom{\Big|}&1 & 2 & 5 &  12 & 29 & {\color{red} 64} & \color{red} 139 & \color{red} 286 & \color{red} 582 & \color{red} 1148 & \color{red} 2227 &\color{red}  \color{red} 4234 & \color{red} 7950 &
  \color{red}  14692 & \color{red} 26842 &\color{red}  48438 \\
$a_3(n)$\phantom{\Big|}& 1 & 3 & 7 & 19 & 44 &  98 &  213 & \color{red} 448 & \color{red} 918 & \color{red} 1832 & \color{red} 3584 & \color{red} 6882 & \color{red} 13012
   & \color{red} 24220 & \color{red} 44480 & \color{red} 80678 \\
 $a_4(n)$\phantom{\Big|}&1 & 4 & 10 & 28 & 68 & 158 & 350 & 750 & 1559 & \color{red} 3170 & \color{red} 6292 & \color{red}12252 &
   \color{red} 23445 & \color{red} 44164 & \color{red} 81995 & \color{red} 150288 \\
 $a_5(n)$\phantom{\Big|}&1 & 5 & 14 & 40 & 103 & 247 & 567 & 1252 & 2668 & 5539 & 11214 &\color{red}  22247 &
 \color{red}  43300 & \color{red} 82871 & \color{red}156152 & \color{red}290202 \\
$a_6(n)$\phantom{\Big|}& 1 & 6 & 19 & 56 & 152 & 378 & 898 & 2042 & 4476 & 9526 & 19740 & 39978 &
   79342 & \color{red} 154650 & \color{red} 296489 & \color{red} 560022 \\
$a_7(n)$\phantom{\Big|}& 1 & 7 & 25 & 77 & 219 & 567 & 1392 & 3263 & 7354 & 16048 & 34055 & 70503
   & 142842 & 283832 & 554196 & \color{red} 1065070 \\
$a_8(n)$\phantom{\Big|}& 1 & 8 & 32 & 104 & 309 & 834 & 2116 & 5114 & 11849 & 26520 & 57620 &
   121950 & 252256 & 511180 & 1016878 & 1989150 \\
$a_9(n)$\phantom{\Big|}& 1 & 9 & 40 & 138 & 428 & 1204 & 3159 & 7870 & 18747 & 43036 & 95729 &
   207125 & 437402 & 903914 & 1831938 & 3647757 \\
$a_{10}(n)$\phantom{\Big|}& 1 & 10 & 49 & 180 & 583 & 1708 & 4637 & 11906 & 29158 & 68652 & 156336 &
   345780 & 745450 & 1570920 & 3243407 & 6573672 \\
$a_{11}(n)$\phantom{\Big|}& 1 & 11 & 59 & 231 & 782 & 2384 & 6699 & 17726 & 44627 & 107763 & 251213
   & 567936 & 1249864 & 2685688 & 5648561 & 11652141 \\
$a_{12}(n)$\phantom{\Big|}& 1 & 12 & 70 & 292 & 1034 & 3278 & 9534 & 25998 & 67276 & 166602 & 397542
   & 918580 & 2063435 & 4520696 & 9684744 & 20332156 \\
$a_{13}(n)$\phantom{\Big|}& 1 & 13 & 82 & 364 & 1349 & 4445 & 13379 & 37596 & 99983 & 253894 &
   620074 & 1464231 & 3357015 & 7498084 & 16360443 & 34952692 \\
$a_{14}(n)$\phantom{\Big|}& 1 & 14 & 95 & 448 & 1738 & 5950 & 18528 & 53650 & 146605 & 381704 &
   954023 & 2302014 & 5386122 & 12263456 & 27250699 & 59239788 \\
$a_{15}(n)$\phantom{\Big|}& 1 & 15 & 109 & 545 & 2213 & 7869 & 25342 & 75605 & 212253 & 566525 &
   1448904 & 3572062 & 8528301 & 19792143 & 44785054 & 99055075 \\
$a_{16}(n)$\phantom{\Big|}& 1 & 16 & 124 & 656 & 2787 & 10290 & 34260 & 105290 & 303628 & 830660 &
   2173572 & 5474290 & 13335035 & 31540550 & 72667194 & 163510356 \\
$a_{17}(n)$\phantom{\Big|}& 1 & 17 & 140 & 782 & 3474 & 13314 & 45811 & 144998 & 429428 & 1203961 &
   3222775 & 8290859 & 20603148 & 49659502 & 116480131 & 266609605 \\
 $a_{18}(n)$\phantom{\Big|}& 1 & 18 & 157 & 924 & 4289 & 17056 & 60627 & 197578 & 600837 & 1725998 &
   4725599 & 12415980 & 31472081 & 77292052 & 184550272 & 429643458 \\
 $a_{19}(n)$\phantom{\Big|}&1 & 19 & 175 & 1083 & 5248 & 21646 & 79457 & 266540 & 832108 & 2448742 &
   6856258 & 18395107 & 47555212 & 118985326 & 289169418 & 684647303 \\
$a_{20}(n)$\phantom{\Big|}& 1 & 20 & 194 & 1260 & 6368 & 27230 & 103182 & 356174 & 1141253 & 3439858
   & 9847768 & 26976044 & 71115601 & 181255026 & 448307314 & 1079349306
\end{tabular}
}
\captionof{table}{Numbers in red are those for which the (generic) formula for $a_\ell(n)$ is anticipated to fail. The numbers have been checked for $n\leq 9$ and $\ell \leq 20$.}
%}
\vfill
\end{landscape}

\begin{table}
{\scriptsize
\begin{tabular}{c|r|r|r|r|r|r}
$n$&$a_1(n)$\hspace*{0.5cm}&$a_2(n)$\hspace*{0.5cm}&$a_3(n)$\hspace*{0.5cm}&$a_4(n)$\hspace*{0.5cm}&$a_5(n)$\hspace*{0.5cm}&$a_6(n)$\hspace*{0.5cm} \\[2pt] \hline
 1 & 1 & 2 & 3 & 4 & 5 & \phantom{\Big |}6 \\
 2 & 4 & 5 & 7 & 10 & 14 & 19 \\
 3 & 10 & 12 & 19 & 28 & 40 & 56 \\
 4 & 24 & 29 & 44 & 68 & 103 & 152 \\
 5 & 51 & 64 & 98 & 158 & 247 & 378 \\
 6 & 109 & 139 & 213 & 350 & 567 & 898 \\
 7 & 222 & 286 & 448 & 750 & 1252 & 2042 \\
 8 & 452 & 582 & 918 & 1559 & 2668 & 4476 \\
 9 & 890 & 1148 & 1832 & 3170 & 5539 & 9526 \\
 10 & 1732 & 2227 & 3584 & 6292 & 11214 & 19740 \\
 11 & 3298 & 4234 & 6882 & 12252 & 22247 & 39978 \\
 12 & 6204 & 7950 & 13012 & 23445 & 43300 & 79342 \\
 13 & 11470 & 14692 & 24220 & 44164 & 82871 & 154650 \\
 14 & 20970 & 26842 & 44480 & 81995 & 156152 & 296489 \\
 15 & 37842 & 48438 & 80678 & 150288 & 290202 & 560022 \\
 16 & 67572 & 86509 & 144697 & 272150 & 532430 & 1043404 \\
 17 & 119368 & 152902 & 256775 & 487388 & 965395 & 1919708 \\
 18 & 208943 & 267783 & 451305 & 863887 & 1731351 & 3491081 \\
 19 & 362389 & 464766 & 786008 & 1516592 & 3073660 & 6280514 \\
 20 & 623438 & 800095 & 1357414 & 2638648 & 5404984 & 11185375 \\
 21 & 1064061 & 1366512 & 2325540 & 4552488 & 9420512 & 19734004 \\
 22 & 1802976 & 2316840 & 3954366 & 7792566 & 16282463 & 34509347 \\
 23 & 3033711 & 3900502 & 6676369 & 13239698 & 27922063 & 59847208 \\
 24 & 5071418 & 6523432 & 11196599 & 22336630 & 47527430 & 102976946 \\
 25 & 8424788 & 10841282 & 18657454 & 37433466 & 80331385 & 175877782 \\
 26 & 13913192 & 17909533 & 30901434 & 62337628 & 134873275 & 298279841
   \\
 27 & 22847028 & 29416966 & 50884452 & 103186612 & 225015223 & 502496682
   \\
 28 & 37315678 & 48055443 & 83327163 & 169824540 & 373141724 & 841161007
   \\
 29 & 60631940 & 78093926 & 135733071 & 277967860 & 615224276 &
   1399559416 \\
 30 & 98030644 & 126276743 & 219978688 & 452594316 & 1008792896 &
   2315201903 \\
 31 & 157743554 & 203211038 & 354780782 & 733229626 & 1645443771 &
   3808746574 \\
 32 & 252671288 & 325518314 & 569519349 & 1182159039 & 2670372299 &
   6232651705 \\
 33 & 402944731 & 519138982 & 910130189 & 1897140990 & 4312780664 &
   10147431024 \\
 34 & 639871871 & 824414851 & 1448166991 & 3031012912 & 6933014899 &
   16440685315 \\
 35 & 1011956958 & 1303853212 & 2294680459 & 4821835750 & 11095408859 &
   26512248644 \\
 36 & 1594100512 & 2053981256 & 3621419828 & 7639072393 & 17680429741 &
   42561099330 \\
 37 & 2501559132 & 3223352798 & 5693103210 & 12054120068 & 28056800955 &
   68028465562 \\
 38 & 3911136893 & 5039865872 & 8916408778 & 18947689292 & 44344779210 &
   108279807765 \\
 39 & 6093172867 & 7852029282 & 13914109052 & 29672809254 & 69817667843
   & 171651101620 \\
 40 & 9459795828 & 12191192807 & 21636960372 & 46301523560 &
   109512215347 & 271048865628 \\
 41 & 14637397882 & 18865058704 & 33532084406 & 71997231090 &
   171153951432 & 426389614752 \\
 42 & 22575337525 & 29097916032 & 51795716561 & 111575067538 &
   266555833407 & 668307945618 \\
 43 & 34708392976 & 44740293582 & 79751566012 & 172343093538 &
   413726582860 & 1043776858764 \\
 44 & 53199143209 & 68581738911 & 122415827920 & 265361653081 &
   640040090348 & 1624602354318 \\
 45 & 81298470388 & 104816149708 & 187338790559 & 407324082526 &
   986987600985 & 2520227376672 \\
   46 & 123880767618 & 159732599729 \\
 47 & 188236334008 & 242738329372 \\
 48 & 285242287944 & 367870426468 \\
 49 & 431088527694 & 556024400588 \\
 50 & 649816920320 & 838232884647 \\
  51 & 977048352353 \\
 52 & 1465442861255 \\
 53 & 2192681711158 \\
 54 & 3273114322046 \\
 55 & 4874718706124 \\
 56 & 7243754365560 \\
 57 & 10740528588174 \\
58 & 15891194045343 \\
59 & 23462627747108 \\
60 & 34570490892429 \\[3pt] \hline
\end{tabular}
}
\caption{Results from Exact Enumeration}\label{exacttable}
\end{table}
\clearpage

\section{A class of restricted plane partitions}

A plane partition is an array of non-negative integers $h_{i,j}$ that are weakly decreasing along both rows and columns i.e.,
\[
h_{i+i,j} \leq h_{i,j}\quad \text{and} \quad h_{i,j+1} \leq h_{i,j}\quad \text{for all } i,j\geq 1\ .
\]
The volume of a plane partition is defined to be the sum of all entries in the array i.,e. $\sum_{i,j} h_{i,j}$ and let $p_2(n)$ denote the number of plane partitions with volume $n$. 
Let $pr(n)$ denote the subset of plane partitions where  one imposes the stronger condition
\[
h_{i+i,j} = h_{i,j}- e \quad \text{and} \quad h_{i,j+1} = h_{i,j}-e\quad \text{for all } i,j\geq 1\ ,
\]
where $e=0$ or $e=1$. 

\noindent The first few numbers are 
\begin{center}
\begin{tabular}{c|r|r|r|r|r|r|r|r  |r | r}
$n$& 1 & 2 & 3 & 4 & 5 & 6 & 7 & 8 & 9~ & 10~\\
$pr(n)$ &  1&2&3&6&10&18&30&41&63 & 102 \\
$p_2(n)$ & 1&3& 6& 13& 24& 48& 86& 160& 282 & 500
\end{tabular}
\end{center}
It is easy to see that 
\begin{equation}
pr(n) <  p_2(n) \text{ for } n>1 \ .
\end{equation}
We also have for $n\gg 1$ that
\[
\log a_1(n) < 4 \log pr(n/4).\ 
\]
We thus have
\begin{equation}
\log pr\left(\tfrac{n}4\right) > \frac{c_0}4 \ n^{2/3}\quad \text{or} \quad \log pr(n) > \frac{c_0}{4^{1/3}} \ n^{2/3}.
\end{equation}
We thus obtain the asymptotic bound as $n\rightarrow\infty$:
\begin{equation}
\frac1{4^{1/3}} n^{-2/3}\log a_1(n)  < n^{-2/3} \log pr(n) < n^{-2/3} \log p_2(n)\ ,
\end{equation}
or equivalently 
\begin{equation}
\boxed{
\frac{c_0}{4^{1/3}}  < n^{-2/3} \log pr(n) <  \tfrac32 (2\zeta(3))^{1/3}\ .
}
\end{equation}
Our estimate of $c_0\approx 2.344$ thus enables us to set a lower bound for the asymptotic behaviour of $pr(n)$.
%\newpage
%\begin{tikzpicture}
%\foreach \a in {0,...,3}{
%  \foreach \L in {\a,...,3}{
%\draw[orange,fill] (2*\L-\a,1.7320508*\a) circle (1);
%%\draw[red,fill] (1+2*\L+\a,1.7320508*\a+0.57735) circle (1);
%}
%}
%\foreach \a in {0,...,2}{
%  \foreach \L in {\a,...,2}{
%%\draw[orange,fill] (2*\L-\a,1.7320508*\a) circle (1);
%%\draw[red,fill] (1+2*\L-\a,1.7320508*\a+0.57735) circle (1);
%}
%}
%\foreach \a in {0,...,1}{
%  \foreach \L in {\a,...,1}{
%\draw[red,fill] (2*\L-\a+2,1.7320508*\a+1.7320508) circle (1);
%%\draw[red,fill] (1+2*\L+\a,1.7320508*\a+0.57735) circle (1);
%}
%}
%%\foreach \a in {0,...,3}{
%%  \foreach \L in {0,...,3}{
%%%\draw[orange,fill] (2*\L+\a,1.7320508*\a) circle (1);
%%\draw[red,fill] (1+2*\L+\a,1.7320508*\a+0.57735) circle (1);
%%}
%%}
%%\foreach \a in {1,...,3}{
%%  \foreach \L in {1,...,3}{
%%\draw[orange,fill] (2*\L+\a,1.7320508*\a) circle (1);
%%%\draw[red,fill] (1+2*\L+\a,1.7320508*\a+0.57735) circle (1);
%%}
%%}
%%\foreach \a in {1,...,3}{
%%  \foreach \L in {1,...,3}{
%%%\draw[orange,fill] (2*\L+\a,1.7320508*\a) circle (1);
%%\draw[red,fill] (1+2*\L+\a,1.7320508*\a+0.57735) circle (1);
%%}
%%}
%%\foreach \a in {2,...,3}{
%%  \foreach \L in {2,...,3}{
%%\draw[orange,fill] (2*\L+\a,1.7320508*\a) circle (1);
%%%\draw[red,fill] (1+2*\L+\a,1.7320508*\a+0.57735) circle (1);
%%}
%%}
%\end{tikzpicture}

\end{document}